\begin{document}

\title{On the non-integrability and dynamics\\ of discrete models of threads
}


\author{Valery Kozlov         \and
        Ivan Polekhin 
}


\institute{V. Kozlov \at
              Steklov Mathematical Institute of Russian Academy of Sciences, Moscow, Russia \\
              \email{kozlov@pran.ru}           
           \and
           I. Polekhin \at
              Steklov Mathematical Institute of Russian Academy of Sciences, Moscow, Russia \\
              \email{ivanpolekhin@mi-ras.ru}
}

\date{Received: date / Accepted: date}

\maketitle

\begin{abstract}
In the paper, we study the dynamics of planar $n$-gons, which can be considered as discrete models of threads. The main result of the paper is that, under some weak assumptions, these systems are not integrable in the sense of Liouville. This holds for both completely free threads and for threads with fixed points that are placed in external force fields. We present sufficient conditions for the positivity of topological entropy in such systems. We briefly consider other dynamical properties of discrete threads and we also consider discrete models of inextensible yet compressible threads.

\keywords{Inextensible thread \and Non-integrability \and Topological entropy \and Discrete model \and Planar linkage}
\end{abstract}

\section{Introduction}
\label{intro}

The study of the motion of a flexible inextensible thread has historically been a popular mechanical problem (see, for instance, the classical book by P. Appell \cite{appell1904traite}). The importance of this topic stems from the role that threads, tethers, ropes, chains and other string-like elements play in a wide range of industrial applications. During most of the twentieth century, these elements were commonly used in light and marine industry  and in instrument engineering \cite{minakov1,schedrov1,hearle1,alekseev1,yakub1,merkin1,irvine1,cost1}. Later, in the second half of the twentieth century, it became understood that tethers can be also useful in space applications \cite{pearson1,bolotina1,bainum1,modi1,bekey1,belets1,netzer1,ander1,modi2,levin1,cosmo1}. Here, it is worth mentioning the monographs by one of the pioneers in this area, V.V. Beletsky with co-authors \cite{belet1,belet2,belet3}. Recently, interest in studying thread dynamics has been revived by the so-called `chain fountain' \cite{biggins1,pf1,kugush1,martins1}.

Usually a thread is considered as an infinitely dimensional system. Therefore, its purely dynamical behavior is difficult to study and only stationary or quasi-stationary configurations can be considered analytically. Moreover, rigorous study of thread dynamics inevitably leads to the consideration of generalized solutions, which is caused by the singularities (cusps) that may appear during the motion of a thread \cite{kozlov2020}.

In this paper, we consider finite dimensional planar models of threads and we present some qualitative results concerning the essentially dynamical properties of the systems. In  most of our results on the non-integrability, we do not assume that the threads are free and move by inertia: we consider threads both in external force fields and threads with fixed points. Even though our models are finite, we do not impose any limitations on the number and the masses of the elements that make up our system.

To be more precise, we will consider the following model of an inextensible and incompressible thread: a finite number ($n$) of rigid segments of equal lengths $l$; the segments form a broken line and are connected by planar hinges; and, masses $m_1,\dots,m_{n+1}$ are located in the endpoints of the segments. It is possible to consider threads with fixed points and threads that form closed contours. In the latter case, the first mass point always coincides with the last one. We will also briefly consider discrete models of compressible threads, which will be specified below.

Our model of an incompressible thread could be called `a Poinsot thread' after Louis Poinsot, who proposed to consider rigid balls strung on a thread to give an interpretation for negative tension that may occur in some equilibrium configurations of a flexible thread \cite{appell1904traite}. 

As mentioned earlier, it is possible to consider various constraints imposed on a thread. For instance, one can fix some points of the system. The following five configurations can be considered as basic from the point of view of possible applications:
\begin{enumerate}
    \item[(\textit{i})] A thread with fixed endpoints (broken line with fixed points),
    \item[(\textit{ii})]  A closed thread with a fixed point ($n$-gon with a fixed point),
    \item[(\textit{iii})] A closed thread (planar $n$-gon),
    \item[(\textit{iv})] A thread with one fixed endpoint ($n$-link pendulum),
    \item[(\textit{v})] A free non-closed thread.
\end{enumerate}

All these systems can be considered as free, in the sense that there are no external forces acting on the system, and as threads in external potential force fields. One can also assume that there are internal forces acting between the masses or the segments of the thread. For instance, these forces can model various elastic properties of our mechanical system.

The topology of the configuration space of the corresponding discrete system will play a key role in our considerations. For cases (\textit{iv}) and (\textit{v}), the structure of the configuration space can be easily understood: it is either an $n$-dimensional torus, or a direct product of an $n$-dimensional torus and a group of parallel translations of the plane (which is isomorphic to $\mathbb{R}^2$). When these systems move by inertia (i.e., the only forces acting on the system are the forces of reaction), we have natural Noetherian first integrals. In case (\textit{iv}), this first integral is the kinetic moment w.r.t. the fixed point. The configuration space obtained after the corresponding reduction is an $(n-1)$-dimensional torus.  In case (\textit{v}), the group of symmetries coincides with the symmetries of the Euclidean plane and we have three Noetherian integrals. After the reduction, we again obtain a system on an $(n-1)$-dimensional torus. For the cases (\textit{i}) --- (\textit{iii}), that will be our main objects of study, the topology of the configuration space can be more complex. 

The rest of this paper is structured as follows. First, we recall some results on the integrability of Hamiltonian systems and we present auxiliary results concerning the topology of planar $n$-gons. In the next section, we show that threads described by models (\textit{i}) --- (\textit{iii}) are not integrable in the class of real analytical functions provided that some natural assumptions hold. We also present some geometrical results concerning the dynamics of these systems. Then, we discuss the question of the positivity of the topological entropy for our models. In the conclusion, we briefly consider several related problems, including possible generalization of our results to higher dimensions and possible models for compressible threads and their properties.

\section{Auxiliary results and definitions}
\subsection{Integrability and non-integrability}
A triple $(S, \omega, H)$, where $S$ is a $2n$-dimensional smooth manifold, $\omega$ is a symplectic structure on $S$ and $H \colon S \to \mathbb{R}$ is a smooth function, is called a Hamiltonian system. A smooth function $F \colon S \to \mathbb{R}$ is called a first integral of the system $(S, \omega, H)$ if
$$
\{F, H\} \equiv 0.
$$
Here $\{ \cdot, \cdot\}$ is the Poisson bracket corresponding to the symplectic structure. We say that system $(S, \omega, H)$ is Liouville integrable (or simply, integrable) if
\begin{enumerate}
    \item There are $n$ first integrals $F_1 = H, \dots, F_n \colon S \to \mathbb{R}$;
    \item These functions are independent, that is, almost everywhere on $S$, $1$-forms $dF_1, \dots dF_n$ are linearly independent;
    \item $\{F_i, F_j\} \equiv 0$ for any $i$ and $j$.
\end{enumerate}
Everywhere below we will consider analytic Hamiltonian systems: manifold $S$ is an analytic manifold, $H$ is a real analytic function. The system is analytically integrable if all of the functions $F_i$ are analytic.

In our considerations, we will assume that $S$ is a cotangent bundle of an $n$-dimensional manifold $M$, that is, $S = T^*M$. By $q$ we will denote local coordinates on $M$ and by $p$ we denote local coordinates on $T_qM$. In particular, in these coordinates the Poisson bracket has the standard form
$$
\{ F,G \} = \sum\limits_{i=1}^n \left( \frac{\partial F}{\partial q_i}\frac{\partial G}{\partial p_i} - \frac{\partial F}{\partial p_i}\frac{\partial G}{\partial q_i} \right).
$$
A more detailed exposition of Hamiltonian mechanics can be found, for instance, in \cite{akn,abraham}. The problem of integrability of Hamiltonian systems is discussed in detail in \cite{kozlovsymm}.

The main tool that will be used here to prove the non-integrability of our system is the following theorem of I.A.\,Taimanov on the non-integrability of geodesic flows \cite{taim1,taim2}.
\begin{theorem}
Given a geodesic flow on an $n$-dimensional closed analytic manifold $M$ with an analytic Hamiltonian function $H \colon M \to \mathbb{R}$. If
\begin{align}
\label{eq1}
\mathrm{dim}\, H_1(M,\mathbb{Q}) > n,
\end{align}
then there are no functions $F_2, \dots, F_n \colon M \to \mathbb{R}$ such that for some energy level $F_1 = H = \mathrm{const} > 0$ we have
\begin{enumerate}
    \item Functions $F_1,\dots,F_n$ are analytic and $\{F_i, F_j\} = 0$ for any $i,j$ in a neighborhood of the level set $H = \mathrm{const}$;
    \item Differentials $dF_1,\dots,dF_n$ are linearly independent on $H = \mathrm{const}$.
\end{enumerate}
\end{theorem}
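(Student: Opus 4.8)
The plan is to argue by contradiction: assume that integrals $F_2,\dots,F_n$ as in the statement exist, and derive the bound $\dim H_1(M,\mathbb{Q}) \le n$, contradicting \eqref{eq1}. Fix the regular energy level $Q = \{H = c\}$, a compact $(2n-1)$-manifold (diffeomorphic to the unit cotangent bundle $ST^*M$, since $H$ is a geodesic Hamiltonian), and let $U \subset Q$ be the open set on which $dF_1,\dots,dF_n$ are linearly independent. Here the analyticity hypothesis does its first piece of work: the complement $\Sigma = Q \setminus U$ is the zero set of a real-analytic function (a minor of the relevant Jacobian), hence a proper analytic, in particular subanalytic, subset; so $U$ is open and dense and $\Sigma$ admits a tame triangulation. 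By the Liouville--Arnold theorem applied on $U$, each connected component of a common level set of $F_1,\dots,F_n$ is a Lagrangian $n$-torus, and $U$ is the total space of a fibration $T^n \hookrightarrow U \to B$ whose base $B$ is an open subset of the image of the momentum map $(F_2,\dots,F_n)$ in $\mathbb{R}^{n-1}$, so $\dim B = n-1$.

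First I would reduce the homology of $M$ to that of the energy level. The projection $\pi\colon Q \to M$ is a locally trivial $S^{n-1}$-bundle, so for $n \ge 3$ the Gysin sequence gives an isomorphism $\pi^* \colon H^1(M,\mathbb{Q}) \xrightarrow{\ \cong\ } H^1(Q,\mathbb{Q})$, whence $b_1(M) = b_1(Q)$ (the cases $n\le 2$, essentially Kozlov's surface theorem, being handled separately and more directly). Next I would feed the torus fibration into the Serre spectral sequence: its five-term exact sequence produces a surjection of $H_1(U,\mathbb{Q})$ onto $H_1(B,\mathbb{Q})$ whose kernel is the image of the fiber group $H_1(T^n,\mathbb{Q})$, giving $b_1(U) \le n + b_1(B)$. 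Everything then comes down to two things: transferring this estimate from $U$ back to the compact $Q$ across the singular set $\Sigma$, and showing that the base contributes nothing new, i.e. that the classes coming from $H_1(B,\mathbb{Q})$ are already carried by fiber directions.

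The main obstacle is exactly this last point, and it is where the symplectic (Lagrangian) structure and, decisively, real-analyticity are indispensable. A loop in $B$ corresponds to a path in the space of Liouville tori, along which the torus bundle carries a monodromy in $SL_n(\mathbb{Z})$; in action--angle coordinates the base directions are dual, via $\omega$, to the angle (fiber) directions, and my aim would be to use this duality to show that $\pi_*$ sends the base contribution into the same rank-$\le n$ subspace $\pi_* H_1(T^n,\mathbb{Q}) \subset H_1(M,\mathbb{Q})$ spanned by a single regular fiber. Equivalently, one shows that the image $\pi_* H_1(\Lambda,\mathbb{Q})$ is independent of the regular torus $\Lambda$ (it is isotopy-invariant within a connected family) and already equals all of $H_1(M,\mathbb{Q})$. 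The genuinely hard part is global: the regular set need not be connected and $\Sigma$ may have codimension one, so one cannot naively excise it, and real-analyticity must be used to give $\Sigma$ a subanalytic stratification and to guarantee that the regular torus fibration controls the homology of all of $Q$. This is not a formal step, since in the $C^\infty$ category the analogous statement fails. Granting it, $b_1(M) = b_1(Q) \le n$, contradicting $\dim H_1(M,\mathbb{Q}) > n$.

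As an independent check, and an alternative in the cases where $\pi_1(M)$ grows exponentially, I would keep in reserve the entropy obstruction: by Paternain's theorem, real-analytic Liouville integrability forces the topological entropy of the geodesic flow to vanish, while the Dinaburg--Manning estimates bound this entropy below by the volume-growth entropy, and hence by the exponential growth rate of $\pi_1(M)$. This already rules out integrability whenever $\pi_1(M)$ is non-elementary, and it foreshadows the entropy results of the later sections; however, it does not by itself deliver the sharp bound $b_1 \le n$ (it is silent when $\pi_1$ has subexponential growth), which is why the homological argument above is the one I would pursue for the stated theorem.
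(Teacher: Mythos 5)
Your proposal does not close the theorem, and it is worth noting first that the paper itself does not prove this statement either: it is Taimanov's theorem, imported verbatim with a citation to \cite{taim1,taim2}, and all the paper's own work consists in verifying the hypothesis \eqref{eq1} for linkage configuration spaces. So the comparison here is between your sketch and Taimanov's published proof. Your preliminary reductions are correct and standard: the Gysin sequence of the sphere bundle $Q=ST^*M\to M$ gives $b_1(M)=b_1(Q)$ for $n\geqslant 3$, Liouville--Arnold applies on the regular open set $U\subset Q$, and the five-term exact sequence of the torus fibration gives $b_1(U)\leqslant n+b_1(B)$. But this chain of estimates cannot yield $b_1(M)\leqslant n$ as it stands: $B$ is merely an open $(n-1)$-manifold, so $b_1(B)$ is a priori unbounded, and passing from $U$ to $Q$ across a singular set $\Sigma$ of possibly codimension one can change the first Betti number by an arbitrary amount. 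You acknowledge both points and then write ``Granting it, $b_1(M)=b_1(Q)\leqslant n$'' --- but what is granted there \emph{is} the theorem. The appeal to $\omega$-duality between action and angle directions is only a statement in local action--angle charts; globally it is obstructed exactly by the monodromy in $SL_n(\mathbb{Z})$ and by the walls of $\Sigma$, and overcoming this is where Taimanov's argument does its real work: analyticity gives a subanalytic stratification of the singular set, which is then used to show that the projection to $M$ of a single, suitably chosen Liouville torus already has finite-index image in $H_1(M,\mathbb{Z})$ modulo torsion, whence $b_1(M)\leqslant n$. None of that non-formal content appears in your write-up.

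The reserve argument is also weaker than you state. There is no theorem asserting that real-analytic Liouville integrability alone forces zero topological entropy: Paternain's vanishing results need additional hypotheses (nondegeneracy or geometric simplicity of the momentum map), and the Bolsinov--Taimanov examples of $C^\infty$-integrable geodesic flows with positive topological entropy show that integrability by itself, without tameness of the singular set, does not suppress entropy. So the fallback rules out much less than ``non-elementary $\pi_1$'' in the generality you claim. In short: you have correctly located the crux of the theorem, but the proposal proves only the routine reductions around it; the honest options are either to supply Taimanov's subanalytic/finite-index argument in full or to do what the paper does and cite \cite{taim1,taim2}.
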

As a corollary from this theorem, we obtain sufficient conditions for non-integrability of so-called natural Hamiltonian systems. Let us recall that the system is called natural if its Hamiltonian has the form
\begin{align}
\label{eq2}
H(p,q) = H_2(p,q) + H_0(q) = \sum\limits_{i,j=1}^n g^{ij}(q)p_ip_j + H_0(q),
\end{align}
where $H_2$ is a positive definite quadratic form in $p$ (kinetic energy). In accordance to the Maupertuis principle, projections of the solutions of this system onto $M$ can be considered as geodesics of the Jacobi metric. To be more precise, consider level set $H(p,q) = h$, where $h > \max H_0$. The trajectories of the system with the Hamiltonian function $H$ on the level $H = h$ then coincide with the trajectories of the system with the Hamiltonian function $\tilde H$
$$
\tilde H = \sum\limits_{i,j=1}^n \frac{g^{ij}(q)}{h - H_0(q)}p_ip_j
$$
on the level set $\tilde H = 1$. Moreover, if the original Hamiltonian system has a first integral $F$ on a level set $H = h$, then the corresponding geodesic flow of the Jacobi metric has a first integral $\tilde F$ on $T^*M$ (possibly, except for the set $\tilde H = 0$) and 
$$
\tilde F(p,q) = F\left(\frac{p}{\sqrt{\tilde H(p,p)}},q\right).
$$
\begin{corollary}
Given an analytic manifold $M$ such that condition \eqref{eq1} holds and a natural Hamiltonian system on $T^*M$. Then, this system cannot be analytically integrable.
\end{corollary}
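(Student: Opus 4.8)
\emph{Proof proposal.} The plan is to argue by contradiction and to reduce the integrability of the natural system to the integrability of a geodesic flow, so that Taimanov's theorem applies. Suppose the natural Hamiltonian system on $T^*M$ is analytically integrable, so that there exist analytic functions $F_1 = H, F_2, \dots, F_n$ on $T^*M$ with $\{F_i,F_j\}\equiv 0$ and with $dF_1,\dots,dF_n$ independent off a proper analytic subset, i.e. on a dense open set $U \subset T^*M$. Fix a regular value $h > \max H_0$ of $H$; since $h - H_0(q) > 0$ on all of $M$, the cometric $g^{ij}(q)/(h-H_0(q))$ defines a genuine analytic Riemannian (Jacobi) metric on the closed analytic manifold $M$ (which we assume closed, as Taimanov's theorem requires), and its geodesic flow has Hamiltonian $\tilde H$. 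By the Maupertuis principle recalled above, the level sets $\{H = h\}$ and $\{\tilde H = 1\}$ coincide in $T^*M$ and the two flows share the same orbits there.

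First I would produce candidate integrals of the geodesic flow. Set $\tilde F_1 = \tilde H$ and, for $i \geq 2$, let $\tilde F_i(p,q) = F_i\bigl(p/\sqrt{\tilde H(p,p)},\,q\bigr)$ be the homogenization described above. Since $\tilde H(p,p)$ is a positive definite quadratic form in $p$, its square root is analytic off the zero section, so each $\tilde F_i$ is analytic in a neighborhood of $\{\tilde H = 1\}$; and, as noted in the discussion of the Maupertuis principle, each $\tilde F_i$ is a first integral of the geodesic flow. It then remains to check that $\tilde F_1,\dots,\tilde F_n$ is involutive and independent on $\{\tilde H = 1\}$, i.e. that it verifies the two hypotheses of Taimanov's theorem.

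The main work, and the step I expect to be the principal obstacle, is this transfer of involutivity and independence from $F_i$ on $T^*M$ to $\tilde F_i$ on the single level set $\{\tilde H = 1\}$. For involutivity, the key observation is that each $\tilde F_i$ is homogeneous of degree $0$ in $p$, so the bracket $\{\tilde F_i,\tilde F_j\}$ is homogeneous of degree $-1$; hence if it vanishes on $\{\tilde H = 1\}$ it vanishes on the whole punctured fiber, and it suffices to verify the vanishing on the level set itself. There $\tilde F_i = F_i$, and $F_i|_{\{\tilde H=1\}}$ is invariant under the characteristic (geodesic) foliation of the hypersurface; coisotropic reduction then shows that the restricted bracket of two foliation-invariant functions is independent of the chosen extensions, so it equals $\{F_i,F_j\}|_{\{\tilde H=1\}} = 0$. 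For independence, a short computation with the Euler (radial) vector field $R = \sum_i p_i\,\partial/\partial p_i$, which is transverse to $\{\tilde H=1\}$ since $R(\tilde H) = 2\tilde H$, shows that on the level set $d\tilde F_i$ differs from $dF_i$ only by a multiple of $d\tilde H$, while $d\tilde H$ is proportional to $dH = dF_1$; consequently $\{d\tilde F_1,\dots,d\tilde F_n\}$ and $\{dF_1,\dots,dF_n\}$ span the same subspace at each point of $\{\tilde H=1\}$. Thus at points of $U$ the homogenized differentials are independent, and choosing $h$ a regular value guarantees that $U$ meets $\{\tilde H=1\}$ in a dense set.

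Putting these together, $\tilde F_1 = \tilde H, \tilde F_2, \dots, \tilde F_n$ satisfy both hypotheses of Taimanov's theorem on the energy level $\tilde H = 1 > 0$ of the analytic geodesic flow on $M$. But condition \eqref{eq1}, namely $\dim H_1(M,\mathbb{Q}) > n$, is exactly the hypothesis under which that theorem forbids the existence of such functions. This contradiction shows that the natural Hamiltonian system on $T^*M$ cannot be analytically integrable.
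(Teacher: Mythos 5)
Your proposal is correct and follows essentially the same route as the paper: pass to the Jacobi metric via the Maupertuis principle at an energy level $h > \max H_0$, homogenize the integrals by $\tilde F_i(p,q) = F_i\bigl(p/\sqrt{\tilde H(p,p)},\,q\bigr)$, and invoke Taimanov's theorem. The paper states this in one sentence; your verification of involutivity and independence of the homogenized integrals on $\{\tilde H = 1\}$ simply makes explicit the details the paper leaves implicit.
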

 Indeed, if we have $n$ first analytic independent integrals $F_1,\dots,F_n$ for the natural Hamiltonian system, then we obtain functions $\tilde F_1,\dots,\tilde F_n$ satisfying the conditions of Theorem 1.

Theorem 1 can also be applied to more general Hamiltonian systems. Let us have a system with an analytic Hamiltonian function $H$
\begin{align}
\label{eq3}
    H = H_2(p,q) + H_1(p,q) + H_0(q),
\end{align}
where $H_2$ and $H_0$ coincides with the corresponding terms in \eqref{eq2} and
$$
H_1(p,q) = \sum\limits_{i=1}^n b^i(q)p_i.
$$
We will say that system \eqref{eq3} is integrable in the class of polynomial in $p$ first integrals with independent highest degree terms if there exist $n$ first integrals $F_i \colon T^*M \to \mathbb{R}$ ($F_1 = H$) of the form
$$
F_i(p,q) = F^{m_i}_i(p,q) + F^{m_i-1}_i(p,q) + \dots F^0_i(q),
$$
where $F^{m_i}_i, F^{m_i-1}_i, \dots F^0_i$ are homogeneous in $p$ analytic polynomials of degrees $m_i, m_i-1, \dots 0$ correspondingly, $\{ F^{m_i}_i, F^{m_j}_j \} \equiv 0$ for any $i$ and $j$, and $dF^{m_1}_1, \dots dF^{m_n}_n$ are linearly independent almost everywhere.

\begin{corollary}
Given an analytic manifold $M$ such that condition \eqref{eq1} holds and a Hamiltonian system on $T^*M$ with the Hamiltonian function \eqref{eq3}. Then, this system cannot be integrable in the class of polynomials in $p$ with independent highest degree terms.
\end{corollary}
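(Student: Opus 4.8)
The plan is to reduce the statement to Theorem~1 by passing to the highest-degree-in-$p$ components of the putative first integrals and showing that these components are first integrals of the geodesic flow generated by the kinetic energy $H_2$. The only structural input we need about the Poisson bracket is how it interacts with the grading by degree in $p$: if $A$ is homogeneous of degree $a$ in $p$ and $B$ is homogeneous of degree $b$, then $\{A,B\}$ is homogeneous of degree $a+b-1$, since each term $\partial_{q_i}A\,\partial_{p_i}B$ and $\partial_{p_i}A\,\partial_{q_i}B$ has degree $a+b-1$. Consequently, for polynomials in $p$ the top homogeneous component of a Poisson bracket equals the bracket of the top homogeneous components of the two factors.

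Assume for contradiction that the system is integrable in the stated class, so there are commuting integrals $F_1 = H, \dots, F_n$ with leading homogeneous parts $F_1^{m_1},\dots,F_n^{m_n}$ satisfying $\{F_i^{m_i}, F_j^{m_j}\} \equiv 0$ and with $dF_1^{m_1}, \dots, dF_n^{m_n}$ independent almost everywhere. Applying the grading remark to $\{F_i, H\} = 0$, where $H = H_2 + H_1 + H_0$ has $p$-degrees $2$, $1$, $0$: the component of degree $m_i + 1$ of $\{F_i, H\}$ is attained only by $\{F_i^{m_i}, H_2\}$, so this term must vanish, i.e. $\{F_i^{m_i}, H_2\} \equiv 0$. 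Hence each $F_i^{m_i}$ is an analytic first integral of the geodesic flow of the metric with cometric $g^{ij}$. Moreover, the leading part of $F_1 = H$ is exactly the degree-$2$ term, so $F_1^{m_1} = H_2$. One also checks that every $m_i \geq 1$: a nonconstant function of $q$ alone cannot Poisson-commute with $H_2$ because $g^{ij}$ is nondegenerate, so a degree-$0$ leading term would force $dF_i^{m_i} = 0$, contradicting independence.

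It then remains to invoke Theorem~1 for the geodesic flow of $H_2$ on the closed analytic manifold $M$ with $\dim H_1(M,\mathbb{Q}) > n$. We have produced $n$ pairwise commuting analytic first integrals $H_2 = F_1^{m_1}, F_2^{m_2}, \dots, F_n^{m_n}$ of this flow whose differentials are independent almost everywhere on $T^*M$, which is precisely the situation excluded by Theorem~1, and this contradiction establishes the corollary.

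The one step requiring genuine care, and the main technical obstacle, is matching hypotheses in the final invocation: Theorem~1 forbids independence of the differentials on a fixed positive energy level $H_2 = \mathrm{const} > 0$, whereas the integrability assumption only supplies almost-everywhere independence on all of $T^*M$. I would bridge this with a Fubini-type argument. The set where $dF_1^{m_1}, \dots, dF_n^{m_n}$ fail to be independent is a proper analytic subset of $T^*M$, hence of measure zero; therefore, for almost every $c > 0$ its intersection with the level set $\{H_2 = c\}$ has measure zero within that level set. Choosing such a value $c$ yields a positive energy level on which the differentials are independent on a dense subset, exactly the configuration ruled out by Theorem~1, completing the contradiction.
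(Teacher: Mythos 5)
Your proposal is correct and takes essentially the same route as the paper, whose proof is the single observation that the highest degree terms $F^{m_1}_1 = H_2, \dots, F^{m_n}_n$ are first integrals of the geodesic flow with Hamiltonian $\tilde H = H_2$, after which Theorem 1 applies. Your grading argument and the Fubini step merely fill in details the paper leaves implicit (and note that $\{F^{m_i}_i, H_2\} \equiv 0$ is in fact immediate from the definition of the integrability class, since it postulates $\{F^{m_i}_i, F^{m_j}_j\} \equiv 0$ and $F^{m_1}_1 = H_2$).
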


The proof directly follows from the fact that the highest degree terms $F^{m_1}_1 = H_2, \dots F^{m_n}_n$ are first integrals for the geodesic flow with Hamiltonian $\tilde H = H_2$.

It is worth mentioning here that all of the known integrable mechanical systems are integrable in the class of polynomial first integrals with independent highest degree terms.

\subsection{Topology of linkages}

In this section, we present the results on the topological properties of planar linkages (see, for instance, \cite{farber1,farber2}).  

Let us have $n$ planar segments with lengths $l_1$, $l_2$, ..., $l_n$, which form a closed polygon. In the following we assume that for the lengths the following condition holds

\begin{align}
\label{eq4}
\sum\limits_{i=1}^n l_i \nu_i \ne 0, \mbox{ for any } \nu_i = \pm 1.
\end{align}

We will denote the configuration space of the polygon, viewed up to isometries of the Euclidean plane, by $\tilde M$:
\begin{align}
\label{eqMl}
\tilde M = \{ (u_1, ..., u_n) \in \mathbb{S}^1 \times ... \times \mathbb{S}^1 \colon \sum\limits_{i = 1}^n  l_i u_i = 0\} / SO(2).  
\end{align}

Equivalently, we can consider our $n$-gon with one of its sides fixed: the configuration space of this system naturally coincides with $\tilde M$.

\begin{theorem}
\label{th21}
$\tilde M$ is an analytic closed orientable manifold of dimension $n - 3$.
\end{theorem}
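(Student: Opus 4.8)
The plan is to realize $\tilde M$ as the quotient of a regular level set of an explicit analytic map and apply the preimage theorem, handling orientability separately at the end. First I would introduce the ``moment map'' $\mu \colon (\mathbb{S}^1)^n \to \mathbb{R}^2$, $\mu(u_1,\dots,u_n) = \sum_{i=1}^n l_i u_i$, where each $\mathbb{S}^1$ is viewed as the unit circle in $\mathbb{R}^2$, so that $M_l := \mu^{-1}(0)$ is the space of closed configurations and $\tilde M = M_l / SO(2)$. Writing $u_i = (\cos\theta_i,\sin\theta_i)$, the partial derivative of $\mu$ in the coordinate $\theta_i$ equals $l_i J u_i$, where $J$ is rotation by $\pi/2$; hence the image of $d\mu$ is the span of the vectors $\{J u_i\}$, equivalently of $\{u_i\}$, since $J$ is a linear isomorphism.

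The crucial step is to check that $0$ is a regular value of $\mu$, and this is exactly where hypothesis \eqref{eq4} enters. The differential $d\mu$ drops rank at a point only when all $u_i$ are parallel, i.e. $u_i = \nu_i u_1$ with $\nu_i = \pm 1$; for such a collinear configuration the closure condition $\sum_i l_i u_i = 0$ reduces to $\bigl(\sum_i l_i \nu_i\bigr) u_1 = 0$, which is forbidden by \eqref{eq4}. Therefore no point of $M_l$ is collinear, $d\mu$ is everywhere surjective on $M_l$, and the preimage theorem gives that $M_l$ is an analytic submanifold of $(\mathbb{S}^1)^n$ of dimension $n-2$; it is compact and boundaryless, being a closed subset of the compact torus $(\mathbb{S}^1)^n$.

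Next I would pass to the quotient. The group $SO(2)$ acts analytically and diagonally on $(\mathbb{S}^1)^n$, preserving $M_l$ because $\sum_i l_i (R u_i) = R \sum_i l_i u_i$. This action is free on $M_l$: a nontrivial rotation fixes no unit vector, so $R u_i = u_i$ forces $R = \mathrm{id}$. Being a free action of the compact (hence proper) group $SO(2)$, it makes $M_l \to \tilde M$ a principal $SO(2)$-bundle, so $\tilde M$ is an analytic closed manifold of dimension $(n-2)-1 = n-3$. Equivalently, fixing one side of the polygon identifies $\tilde M$ with a regular level set of the analogous map on $(\mathbb{S}^1)^{n-1}$, bypassing the quotient while giving the same dimension count.

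It remains to prove orientability, which I expect to be the only genuinely delicate point. I would first observe that $M_l$ is orientable because it is a regular fiber of a map between orientable manifolds: the normal bundle of $M_l$ in $(\mathbb{S}^1)^n$ is canonically isomorphic, via $d\mu$, to the trivial bundle $M_l \times \mathbb{R}^2$, so orientations of $(\mathbb{S}^1)^n$ and of $\mathbb{R}^2$ induce one on $M_l$ through the splitting $TM_l \oplus N$ of the restricted tangent bundle. Finally, the vertical line bundle of $M_l \to \tilde M$ is trivialized by the nowhere-vanishing fundamental vector field of the $SO(2)$-action, and this trivialization is $SO(2)$-invariant; since $SO(2)$ is connected it acts preserving orientation, so the induced orientation on the horizontal complement is invariant and descends to $\tilde M$. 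Hence $\tilde M$ is orientable, completing the proof. The main obstacle is thus not the manifold structure, which is a direct regular-value computation with \eqref{eq4} supplying precisely the exclusion of degenerate aligned configurations, but rather assembling these orientation data consistently across the orbits of the group action.
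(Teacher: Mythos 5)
Your proof is correct. Note that the paper itself does not prove this statement at all --- it quotes it from the linkage-topology literature (\cite{farber1,farber2}), and the standard argument given there is essentially the one you wrote: the regular-value theorem applied to $\mu(u)=\sum_i l_i u_i$, with condition \eqref{eq4} exactly ruling out the collinear configurations where $d\mu$ drops rank, followed by the free and proper diagonal $SO(2)$-action. One simplification worth pointing out: the orientability step, which you single out as the only delicate point, becomes immediate in the alternative formulation you mention only in passing. Fixing the $n$-th side realizes $\tilde M$ itself (not just $M_l$) as a regular level set of $(u_1,\dots,u_{n-1})\mapsto\sum_{i<n}l_i u_i$ at the value $-l_n e_1$ inside the parallelizable torus $(\mathbb{S}^1)^{n-1}$, regularity again following from \eqref{eq4}; a regular level set of an $\mathbb{R}^2$-valued map has trivial normal bundle in a parallelizable manifold, hence is orientable, and the descent-of-orientation argument across the $SO(2)$-orbits is not needed.
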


\begin{remark}
When condition \eqref{eq4} does not hold, there are a finite number of singularities on $\tilde M$ that correspond to the collinear configurations of the linkage. Note that this condition holds for a generic set of lengths.
\end{remark}

\begin{definition}
Given a polygon with lengths $l_1, l_2, ..., l_n$, we call a subset of its sides $J = \{ i_1, i_2, ..., i_k \}$ short when
$$
\sum\limits_{i \in J}l_i < \sum\limits_{i \not\in J}l_i.
$$
\end{definition}

In the following, we will use the following result on the topology of the configuration space of a linkage.\\

\begin{theorem}
\label{th26}
Given a planar polygon satisfying \eqref{eq1}, let $l_i$ be a side of the maximal length (i.e., $l_i \geqslant l_j$ for any $j$), for every $k \in \{0, 1, ..., n-3\}$, the homology group $H_k(M_l;\mathbb{Z})$ is a free Abelian group of rank $a_k + a_{n-3-k}$, where $a_k$ denotes the number of short subsets of $k+1$ elements containing $l_i$.
\end{theorem}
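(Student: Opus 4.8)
The formula is symmetric under $k \mapsto n-3-k$, and $M_l = \tilde M$ is a closed orientable manifold of dimension $n-3$ by Theorem \ref{th21}; this strongly suggests that the two summands $a_k$ and $a_{n-3-k}$ are interchanged by Poincar\'e duality and that $H_*(M_l;\mathbb{Z})$ is free. The plan is to establish both facts at once by producing a Morse function on $M_l$ whose critical points are organized by short subsets containing the maximal side. I will work in the model in which the maximal side $l_i$ is pinned to a fixed position; in view of \eqref{eqMl} this realizes the quotient by $SO(2)$ and presents $M_l$ as the configuration space of a planar arm with links $\{l_j : j\neq i\}$ whose endpoints are required to close up the polygon. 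Throughout, the genericity hypothesis \eqref{eq4} is what guarantees that $M_l$ is smooth and that the critical points introduced below are nondegenerate: there are no degenerate collinear closed configurations, since $\sum_j \nu_j l_j \neq 0$ for all sign choices $\nu_j=\pm1$.

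On this model I take a generic potential-energy Morse function, for instance a generic weighted height $f=\sum_{j\neq i}\mu_j\sin\theta_j$ of the vertices (equivalently a generic bending-type function), where $\theta_j$ is the direction of the $j$-th link. A Lagrange-multiplier computation against the two scalar closure constraints shows that the critical points are isolated configurations in which each free link is frozen into one of two antipodal directions determined by the multipliers; they are thus encoded by discrete sign data $\epsilon_j=\pm1$. The key combinatorial step is to prove that, once the closure constraints are imposed, the admissible sign patterns are in bijection with short subsets $J\ni l_i$, that each such $J$ gives exactly two critical points, and that their Morse indices are $|J|-1$ and $(n-2)-|J|=(n-3)-(|J|-1)$. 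Granting this, the number of critical points of index $k$ is exactly $a_k+a_{n-3-k}$, where $a_k$ counts short subsets of size $k+1$ containing $l_i$.

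The heart of the argument, and the step I expect to be the main obstacle, is to show that this Morse function is \emph{perfect over $\mathbb{Z}$}: that the integral Morse differential vanishes, so that $H_k(M_l;\mathbb{Z})$ is free of rank equal to the index-$k$ critical count $a_k+a_{n-3-k}$, which is the assertion. The symmetry of the counts is merely consistent with Poincar\'e duality on the closed orientable $M_l$; the substantive difficulty is to rule out both cancellation in the Morse complex and torsion in the homology. This is genuinely delicate because $M_l$ is a \emph{real} object, namely the fixed locus of an anti-holomorphic involution on the complex polygon space, and such real loci can in principle carry $2$-torsion. I would attempt to exclude it by showing that the incidence coefficients between critical points vanish: first by a parity or lacunarity argument whenever geometrically adjacent critical points have indices differing by more than one, and, where that fails, by a direct analysis of the connecting gradient trajectories, or alternatively by an induction on $n$ through the wall-crossing (surgery) description of how $M_l$ changes when a length crosses a wall $\sum_j\epsilon_j l_j=0$. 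In the final bookkeeping the summand $a_k$ records the critical points arising directly from short subsets of size $k+1$, while $a_{n-3-k}$ records their Poincar\'e-dual partners, the orientability furnished by Theorem \ref{th21} being exactly what pairs them.
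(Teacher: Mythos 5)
The first thing to say is that the paper contains no proof of this statement: Theorem~3 is the Farber--Sch\"utz theorem, quoted from \cite{farber1,farber2}, so the only meaningful comparison is with the cited proof. (Also, the hypothesis ``satisfying \eqref{eq1}'' in the statement is evidently a typo for \eqref{eq4}, as you correctly assumed.) The cited proof is Morse-theoretic, like your plan, but with a decisive structural difference: Farber and Sch\"utz never attempt to produce a perfect Morse function on $M_l$ itself. They work on the ambient torus $T^{n-1}$ of directions of the non-maximal links (the maximal link being pinned), with the function $f(u) = -\bigl|\sum_{j \neq i} l_j u_j + l_i e_1\bigr|^2$. Its maximum set $f^{-1}(0)$ is exactly $M_l$, a nondegenerate critical submanifold in the Morse--Bott sense, while the remaining critical points are the \emph{collinear configurations of the open arm}, $u_j = \pm e_1$, indexed by sign vectors, i.e.\ by subsets; condition \eqref{eq4} guarantees these lie off $M_l$ and are nondegenerate, and their indices are expressed through short subsets. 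The integral homology of $M_l$ is then extracted by comparison with the known homology $H_*(T^{n-1}) \cong \Lambda^* \mathbb{Z}^{n-1}$: the descending manifolds of the collinear critical points are coordinate sub-tori, whose homology classes are direct summands of the free homology of the torus, so the relevant exact sequences split over $\mathbb{Z}$. Freeness and the rank formula fall out of this splitting; no perfectness assertion about a function on $M_l$ is ever needed.

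This is precisely where your proposal has a genuine gap, which you identify but do not close. First, your ``key combinatorial step'' is only asserted, and for your particular function it is not even clearly true: by \eqref{eq4} there are no collinear \emph{closed} configurations, so the critical points of the constrained function $\sum_j \mu_j \sin\theta_j$ are non-collinear configurations determined implicitly by the Lagrange multipliers (each link has its own multiplier-dependent axis); proving that they come in pairs bijectively indexed by short subsets containing $l_i$, with Morse indices $|J|-1$ and $(n-2)-|J|$, is a substantive analytic--combinatorial task. The clean sign-vector description in the literature appears exactly because the Morse function lives on the torus, where collinear open-arm configurations do exist. Second, and decisively, $\mathbb{Z}$-perfectness --- no cancellation in the Morse differential and no torsion --- is the entire content of the theorem; the index bookkeeping and the Poincar\'e-duality symmetry of the ranks are trivial by comparison. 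You offer a menu of possible strategies (lacunarity, direct analysis of connecting trajectories, wall-crossing induction) but carry out none, and, as you yourself note, real loci of complex polygon spaces can in principle carry $2$-torsion, so the issue cannot be waved away. Note also that ``exactly two critical points per short subset with the stated indices'' is logically equivalent, given the theorem, to perfectness of your function, so your step one already presupposes the hard part. As it stands the proposal is a plausible research plan rather than a proof; the device that makes the actual proof work --- transferring the Morse theory to an ambient space whose homology is already known and free --- is the one ingredient your plan omits.
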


\begin{corollary}
Let $n = 2r + 1$ and for all $i$ we have $l_i = 1$. Then
\begin{align}
\label{betti}
   b_k(\tilde M)=
\begin{cases}
& C_{n-1}^k, \mbox{ for } k < r - 1,\\
& 2C_{n-1}^{r-1}, \mbox{ for } k = r - 1,\\
& C_{n-1}^{k+2}, \mbox{ for } k > r - 1.
\end{cases} 
\end{align}

\end{corollary}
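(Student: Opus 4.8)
The plan is to apply Theorem~\ref{th26} directly, reducing everything to a combinatorial count of short subsets. First I would observe that, since all lengths equal $1$, every side is of maximal length, so I may fix any side, say $l_1$, to play the role of the distinguished maximal side in Theorem~\ref{th26}. I would also note at the outset that the genericity hypothesis \eqref{eq4} holds automatically here: with $l_i \equiv 1$ the sum $\sum_i \nu_i$ is an odd integer (a sum of $n = 2r+1$ terms equal to $\pm 1$) and hence never zero. This is precisely where the oddness of $n$ enters, and it is what makes the notion ``short'' a clean combinatorial condition rather than a boundary case.

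Next I would translate the definition of ``short'' into a size condition. Since all lengths are $1$, a subset $J$ with $|J| = j$ satisfies $\sum_{i \in J} l_i = j$ and $\sum_{i \notin J} l_i = n - j = 2r+1 - j$, so $J$ is short exactly when $j < 2r+1 - j$, that is, $j \leqslant r$. Consequently, to compute $a_k$ --- the number of short $(k+1)$-element subsets containing $l_1$ --- I count $(k+1)$-subsets that contain $l_1$ and have size at most $r$. If $k+1 \leqslant r$, every such subset is short and I am choosing the remaining $k$ elements freely from the other $n-1 = 2r$ sides, giving $a_k = C_{n-1}^{k}$; if $k+1 > r$, i.e.\ $k \geqslant r$, there are no short subsets of that size and $a_k = 0$.

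Finally I would substitute into the rank formula $b_k(\tilde M) = a_k + a_{n-3-k}$ from Theorem~\ref{th26}, with $n-3 = 2r-2$, and split into the three ranges. For $k < r-1$ one has $a_k = C_{n-1}^k$, while the complementary index $2r-2-k \geqslant r$ forces $a_{2r-2-k} = 0$, yielding $b_k = C_{n-1}^k$. For $k = r-1$ both indices equal $r-1$, giving $b_{r-1} = 2C_{n-1}^{r-1}$. For $k > r-1$ one has $a_k = 0$ while $a_{2r-2-k} = C_{n-1}^{2r-2-k}$, and the binomial symmetry $C_{n-1}^{2r-2-k} = C_{n-1}^{k+2}$ produces the last case.

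There is no serious obstacle here: the content is entirely combinatorial, and the only points requiring care are bookkeeping ones --- verifying that \eqref{eq4} holds so that Theorem~\ref{th26} applies, correctly locating the threshold $j = r$ for shortness, and invoking the symmetry $C_{n-1}^m = C_{n-1}^{n-1-m}$ in the last range. The one subtlety worth flagging is that the clean equivalence ``short $\Leftrightarrow$ size $\leqslant r$'' relies on $n$ being odd; for even $n$ the half-size subsets would sit exactly on the boundary $\sum_{i\in J} l_i = \sum_{i \notin J} l_i$, and both the genericity and the count would have to be reexamined.
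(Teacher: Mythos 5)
Your proof is correct and is exactly the argument the paper intends: the paper states this corollary without proof as an immediate consequence of Theorem~\ref{th26}, and your combinatorial count (oddness of $n$ giving condition \eqref{eq4}, shortness $\Leftrightarrow$ size $\leqslant r$, then $b_k = a_k + a_{n-3-k}$ with the binomial symmetry $C_{n-1}^{2r-2-k} = C_{n-1}^{k+2}$) fills in precisely the omitted calculation.
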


The following result on the fundamental groups of planar polygons will be used when we will discuss topological entropy. The proof can be found in \cite{schutz1} (see also \cite{schutz2}, where the same technique has been used to calculate the fundamental groups for more complex types of planar linkages).

\begin{theorem}
Let us have a planar polygon and either $l_i = 1$ for all $i$, or $l_i = 1$ for $1 \leqslant i \leqslant n-1$ and $l_n = l$, where $1 \leqslant l < n-1$, $n \geqslant 7$ and condition \eqref{eq4} holds. Then 
\begin{align}
    \label{fund}
    \pi_1(\tilde M) \cong \left\langle a_1,\dots,a_{n-1} \left|\, 
    \begin{aligned}
    &a_k, \mbox{ if } \{k,n\} \mbox{ is not short}\\
    &[a_i,a_j], \mbox{ if } \{i,j,n\} \mbox{ is short}
    \end{aligned}\,
    \right\rangle\right.
\end{align}
where $[a_i,a_j] = a_i^{-1}a_j^{-1}a_ia_j$.
\end{theorem}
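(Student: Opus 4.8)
The plan is to build an explicit handle decomposition of $\tilde M$ and to read the presentation off its $2$-skeleton, using that $\pi_1$ depends only on the cells of dimension at most two. Since $l_n$ is a side of maximal length (here $l_n=l\geqslant 1=l_i$), I would first fix the last edge and realize $\tilde M$ as the codimension-two subvariety
\[
\Big\{(u_1,\dots,u_{n-1})\in(\mathbb{S}^1)^{n-1}\ :\ \textstyle\sum_{i=1}^{n-1}l_iu_i=-l_nu_n\Big\}
\]
of the torus $(\mathbb{S}^1)^{n-1}$, with $u_n$ pinned to a reference direction. This ambient torus motivates the candidate generators $a_1,\dots,a_{n-1}$ of \eqref{fund}, namely the classes of the coordinate circles of the edges $1,\dots,n-1$; only those circles that are realized by genuine loops in $\tilde M$ will be true generators, the rest appearing in \eqref{fund} as formal generators removed by relators. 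To decide which survive and what relations hold, I would use the Morse theory that underlies Theorem~\ref{th26}.

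The critical data of the relevant Morse function are indexed by the short subsets $J$ containing the longest edge $l_n$, with the index governed by $|J|$; condition \eqref{eq4} guarantees that no balanced collinear configuration lies on $\tilde M$, so by Theorem~\ref{th21} the space is a smooth closed manifold and the critical points are nondegenerate. In this dictionary the unique index-$0$ cell corresponds to $\{n\}$ (short because $l<n-1$), and the index-$1$ cells correspond to the short pairs $\{k,n\}$. Hence a coordinate circle $a_k$ carries a genuine $1$-handle exactly when $\{k,n\}$ is short; when $\{k,n\}$ is not short the edge $k$ cannot swing all the way around and the loop is contractible in $\tilde M$, which is the content of the relator $a_k$. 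Up to Tietze moves this identifies the generating set: one generator per edge, with those indexed by non-short pairs killed.

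The commutator relations must then come from the index-$2$ cells, which in the same dictionary correspond to the short triples $\{i,j,n\}$. I would show that the $2$-cell glued at such a configuration is attached along the two independent swings of edges $i$ and $j$: near the degenerate configuration the descending disc is modelled on the product of these two circle motions, whose boundary, pushed into the $1$-skeleton, is homotopic to $a_i^{-1}a_j^{-1}a_ia_j$. This is the familiar ``product of two circles'' local model and produces exactly the relators $[a_i,a_j]$ for $\{i,j,n\}$ short. Assembling the $0$-, $1$- and $2$-handles via the standard handle presentation of the fundamental group then yields \eqref{fund}.

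I expect the crux to be the rigorous computation of these attaching maps: one must follow the gradient homotopies joining each index-$2$ descending manifold to the index-$1$ handles and check that the resulting word is the commutator in the correct orientation and that no spurious relations arise. This is where the hypotheses do their work. Genericity \eqref{eq4} keeps every degenerate configuration isolated and nondegenerate, while the bound $n\geqslant 7$ controls the Poincar\'e-dual families of critical points coming from short subsets with small complement: for $n\geqslant 7$ one has $a_{n-4}=0$, so no extra $1$-handles enlarge the generating set, and the remaining high-index $2$-handles (which survive only at $n=7$) must be shown to attach trivially, contributing to $H_2(\tilde M)$ but not to $\pi_1$. As a consistency check, abelianizing \eqref{fund} collapses every commutator and leaves a free abelian group on the short pairs $\{k,n\}$, matching $H_1(\tilde M;\mathbb{Z})$ computed from Theorem~\ref{th26} once $a_{n-4}=0$. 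Finally, the two length regimes are handled uniformly: the Morse dictionary applies verbatim to both, and since \eqref{fund} is phrased directly in terms of the short/non-short status of the pairs and triples containing $n$, it covers each regime once the handle structure is established (in the equilateral case with $n$ odd, for instance, every such pair and triple is short, so \eqref{fund} collapses to the free abelian group $\mathbb{Z}^{n-1}$, in agreement with $b_1=n-1$ from Theorem~\ref{th26}).
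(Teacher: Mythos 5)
First, a point of comparison: the paper does not prove this theorem at all --- it is imported from Sch\"utz \cite{schutz1} --- so your argument has to stand on its own, and as it stands it has a structural gap at its foundation. You assume that ``the Morse theory that underlies Theorem~\ref{th26}'' hands you a handle decomposition of $\tilde M$ itself, with one $k$-handle for each short subset of size $k+1$ containing the longest edge. That is not what that Morse theory provides. In the Farber--Sch\"utz setup the Morse function is (minus) the squared distance-to-target function on the ambient torus $(\mathbb{S}^1)^{n-1}$ of open-arm configurations; $\tilde M$ sits inside it as the top level set (a nondegenerate critical submanifold), while the isolated critical points --- the collinear configurations indexed by subsets --- lie \emph{off} $\tilde M$ and give handles attached to the complement of a tubular neighborhood $\tilde M\times D^2$. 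The ranks $a_k+a_{n-3-k}$ of Theorem~\ref{th26} come out of this ambient picture combined with Poincar\'e duality, not from a perfect Morse function on $\tilde M$ with your asserted critical data; no such function is constructed anywhere, and positing one is close to assuming the conclusion. A related symptom: your generators are not well defined, since a coordinate circle of the ambient torus is not a loop in $\tilde M$ (rotating edge $k$ alone destroys closure). One must instead construct, for each short pair $\{k,n\}$, a loop in $\tilde M$ with winding number one in the $k$-th coordinate and zero in the others (this exists precisely because $\{k,n\}$ is short), and then prove that these loops generate $\pi_1(\tilde M)$ --- the surjectivity being the genuinely hard half of the theorem.

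Second, even granting a handle structure, the two computations where the content of \eqref{fund} actually lives are deferred rather than carried out, as you yourself flag: (a) that the $2$-cells at short triples attach along $[a_i,a_j]$ (you state the expected ``product of circles'' local model but compute no gradient attaching map --- note the relation has a soft geometric origin: a short triple $\{i,j,n\}$ spans an embedded $2$-torus in $\tilde M$ containing the loops $a_i,a_j$, which forces them to commute; what needs Morse theory is that there are \emph{no other} relations); and (b) that the dual-family $2$-handles, which are present exactly at $n=7$ where $a_{n-5}=a_2$ can be nonzero, impose no further relations. Also the role you assign to $n\geqslant 7$ is misplaced: since $l\geqslant 1$, one already has $a_{n-4}=0$ at $n=6$, so the hypothesis is really the dimensional requirement $\dim\tilde M=n-3\geqslant 4$ needed to run the handle and van Kampen comparisons (at $n=6$ one is in a $3$-manifold, where no such reading-off is possible). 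Your combinatorial bookkeeping --- \eqref{eq4} giving smoothness via Theorem~\ref{th21}, the abelianization check against Theorem~\ref{th26}, the equilateral sanity check --- is correct and the overall Morse-theoretic spirit does match the cited proof, but as written this is a plan for a proof, not a proof.
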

In other words, we have a free group with generators $a_1,\dots,a_{n-1}$ and we put $a_k = 1$ if $\{k,n\}$ is not short and we put $[a_i,a_j] = 1$ if $\{i,j,n\}$ is short.

\section{Non-integrability}

In this section we study the non-integrability of models (\textit{i})--(\textit{iii}). For each of these models, we consider two classes of systems, which are different from the dynamical point of view:
\begin{enumerate}
    \item A natural Hamiltonian system. In this case the Hamiltonian of the corresponding thread has the form
    \begin{align}
    \label{eqnat}
    H = H_2(p,q) + H_0(q),        
    \end{align}
    where $H_2(p,q)$ is a quadratic positive definite form in $p$, i.e., $H_2$ is the kinetic energy; $p = (p_1, \dots p_k)$ are the generalized momenta and $q = (q_1, \dots, q_k)$ are the local coordinates on the configuration space of dimension $k$ and $H_0(q)$ is a function on the configuration space which corresponds to the potential forces acting on the system. These forces include both external and internal forces, i.e., for instance, it can be an external force field of gravity or restoring forces caused by the springs located in the joints of the thread. A free thread moving by inertia can be considered as a natural system such that $H_0 \equiv 0$. 
    \item A Hamiltonian system with gyroscopic forces. The Hamiltonian has the form
    \begin{align}
    \label{eqgyro}
    H = H_2(p,q) + H_1(p,q) + H_0(q),
    \end{align}
    where $H_1(p,q) = \sum_{i=1}^k b_i(q) p_i$ and $H_2$ and $H_0$ are defined as above. The term $H_1$ corresponds to so-called gyroscopic forces. For instance, it can be magnetic forces acting on the thread.
\end{enumerate}

\subsection{A thread with fixed endpoints}

Consider the following natural model of a thread with fixed endpoints: a collection of $n$ rigid planar segments of the same length, the segments are pairwise connected by  joints at their endpoints and form a planar broken line between two fixed points. Without loss of the generality we can assume that all segments have unit length. We will denote the distance between the fixed points by $l$ and assume that $l < n$ and $l \notin \mathbb{N}$. Then \eqref{eq4} holds for the closed $(n+1)$-gon such that $l_1 = \dots l_n = 1$ and $l_{n+1} = l$ (Fig. 1).

The above specifies the kinematics of the thread. Its dynamical behavior is determined by the distribution of mass of the thread and by the forces acting on the system.

\begin{figure}[h]
    \centering
    \includegraphics[width=0.49\textwidth]{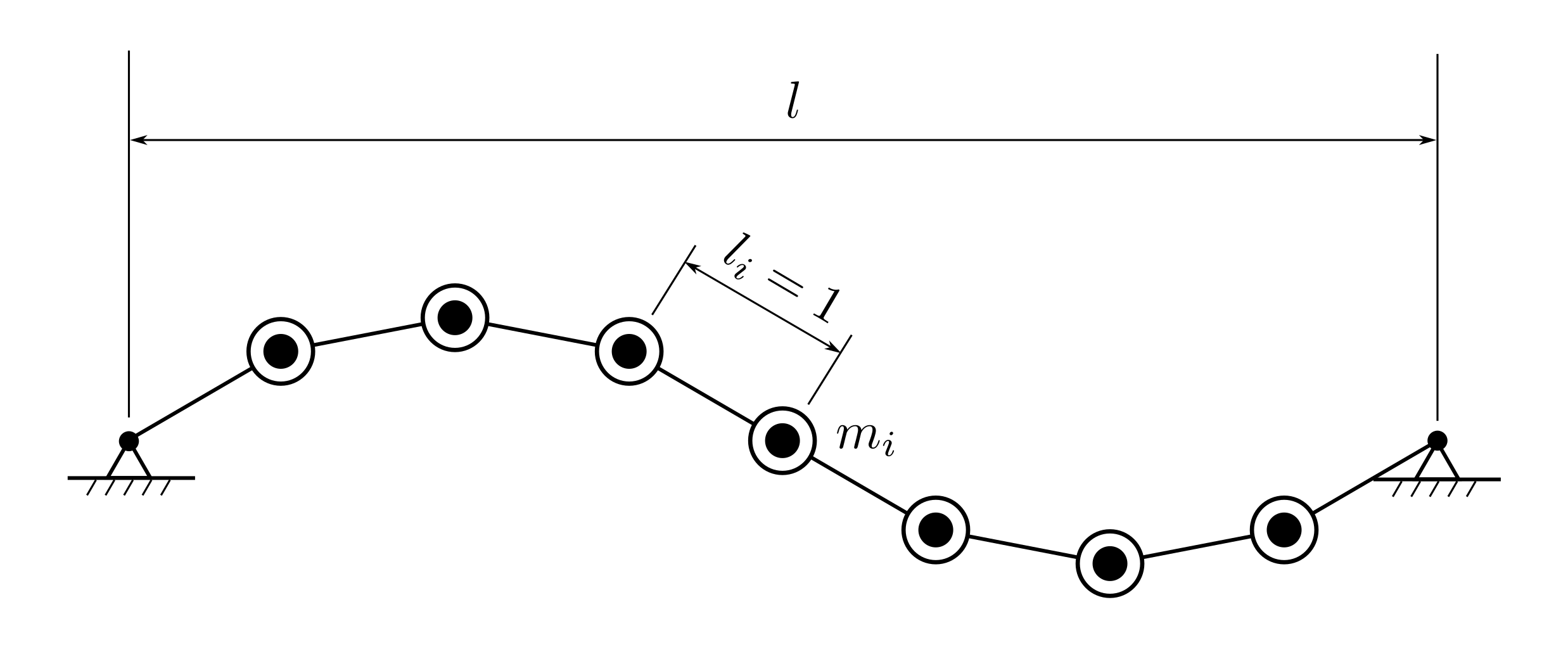}
    \caption{An example configuration of an inextensible and incompressible thread with fixed endpoints (model (\textit{i})).}
    \label{fig1}
\end{figure}

We assume that all mass of the thread is concentrated in the joints, i.e., there are $n-1$ masses $m_i > 0$ such that the $i$-th mass is located in the joint between the $i$-th and the $(i+1)$-th segments of the broken line. We do not consider mass points at the fixed points since they do not affect the dynamics of the system.
When the thread is homogeneous it is natural to assume that $m_i = m_j$ for all $i$ and $j$.

Everywhere below we assume that the thread moves without friction and the system is Hamiltonian. Let the system have $k$ degrees of freedom, that is, the dimension of the configuration space equals $k$.

 Consider the closed polygon such that one of its sides has length $l$ and all other sides are of unit length. From Theorem 2 we obtain that $k = n - 2$. Therefore, we assume that $n \geqslant 4$, since all one-dimensional cases are trivially integrable.

Let us now calculate the first Betti number of this space.

\begin{proposition}
If $0 < l < 1$, then
$$
b_1(M) =
\begin{cases}
n+4, &\mbox{ if } n = 4,5\\
n, &\mbox{ if } n > 5. 
\end{cases}
$$
If  $1 < l < n-2$, $l \notin \mathbb{N}$, then for $n > 4$ we have $b_1(M) = n$ and for $n = 4$ we have $b_1(M) = 8$. If $n-2 < l < n$ then $b_1(M) = 0$.
\end{proposition}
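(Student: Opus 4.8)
The plan is to realize the configuration space $M$ as the space $\tilde M$ of the closed $(n+1)$-gon with side lengths $l_1 = \dots = l_n = 1$ and $l_{n+1} = l$, and then read $b_1$ off directly from Theorem \ref{th26}. Since this polygon has $N := n+1$ sides, Theorem \ref{th21} gives $\dim M = N - 3 = n-2$, and Theorem \ref{th26} (applied with $N$ in place of its $n$) says that $H_1(M;\mathbb{Z})$ is free abelian of rank
\begin{align*}
b_1(M) = a_1 + a_{N-4} = a_1 + a_{n-3},
\end{align*}
where $a_k$ counts the short $(k+1)$-element subsets of the sides that contain a side of maximal length. Thus the whole statement reduces to two combinatorial counts, $a_1$ (short pairs) and $a_{n-3}$ (short $(n-2)$-subsets), evaluated on each stated range of $l$.

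First I would fix the longest side. Writing the total length as $L = n + l$, a subset $J$ is short exactly when $\sum_{i\in J} l_i < L/2$. When $0 < l < 1$ every maximal side is a unit side, so I single out one unit side as the distinguished longest side (noting that the counts below do not depend on this choice); when $l > 1$ the unique longest side is the $l$-side. Computing $a_1$ then amounts to deciding, for the pairs containing the chosen longest side, whether the pair sum falls below $L/2$. This produces the thresholds $n + l \gtrless 4$ for unit–unit pairs and $l \gtrless n-2$ for a pair involving the $l$-side, both easy to resolve on each range.

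For $a_{n-3}$ it is cleaner to pass to complements: a short $(n-2)$-subset $J$ containing the longest side corresponds bijectively to a $3$-element subset $K = J^c$ avoiding the longest side with $\sum_{i\in K} l_i > L/2$. There are only two shapes for such $K$ (three unit sides, or — when the longest side is a unit side — two unit sides together with the $l$-side), and whether each is long is governed by the thresholds $n + l \gtrless 6$ and $l \gtrless n-4$. Carrying this out on the three ranges $0 < l < 1$, $1 < l < n-2$, and $n-2 < l < n$ gives $a_{n-3} = 0$ except in the low-dimensional cases, and assembling $b_1 = a_1 + a_{n-3}$ reproduces the stated values; in particular the exceptional answers $b_1 = n+4$ at $n = 4,5$ and $b_1 = 8$ at $n = 4$ arise precisely from the regime where these threshold inequalities still admit short triples.

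The routine part is the arithmetic of these inequalities; the one place that needs care is the bookkeeping at the thresholds, since the special behavior at $n = 4,5$ is exactly where inequalities such as $n + l < 6$ or $l > n-4$ change sign. I would also verify the two degenerate features of the formula: that for $n = 4$ (so $\dim M = 2$ and $n-3 = 1$) the two terms coincide and $b_1 = 2a_1$ is forced to be even, consistent with $M$ being a closed orientable surface; and that for $l < 1$ the final value of $b_1$ is genuinely independent of which maximal unit side was singled out.
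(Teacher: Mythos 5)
Your proposal is correct and follows essentially the same route as the paper: the paper's proof is likewise a direct calculation of $b_1 = a_1 + a_{n-3}$ for the $(n+1)$-gon via the homology theorem for planar polygon spaces, case-splitting on the ranges of $l$. In fact your write-up (with the explicit threshold inequalities $n+l \gtrless 4$, $l \gtrless n-2$, $n+l \gtrless 6$, $l \gtrless n-4$ and the complement bijection for counting short $(n-2)$-subsets) is more detailed than the paper's own telegraphic computation, which merely lists the resulting values of $a_1$ and $a_{n-3}$ in each case.
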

\begin{proof}
The proof is a direct calculation based on Theorem 2. If $0 < l < 1$ and $n = 4$, the total number of elements in the thread is $5$, we have $a_1 = 4$ and $b_1 = a_1 + a_1 = 8$. Similarly, if $0 < l < 1$ and $n = 4$ we have $a_1 = 5$ and $a_2 = 4$, $b_1 = a_1 + a_2 = 9$. If $1 < l < n-2$ and $n = 4$, then $a_1 = 4$ and $b_1 = 2a_1$. If $1 < l < n-2$ and $n > 4$, then $a_1 = n$ and $a_{n-3} = 0$, $b_1 = n$. Finally, for $n-2 < l < n$ we obtain $b_1 = 0$.
\end{proof}
 We see that \eqref{eq1} holds for all $n$ and for all $l < n-2$. From Corollary 1 we have
\begin{proposition}
Let $0 < l < n-2$ and $l \notin \mathbb{N}$, and the Hamiltonian function \eqref{eqnat} of the system is an analytic function, then the system is not analytically integrable.
\end{proposition}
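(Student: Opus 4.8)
The plan is to reduce the statement to a direct application of Corollary 1 (the non-integrability criterion for natural systems), so that the only substantive point to check is the topological hypothesis \eqref{eq1}. First I would pin down the configuration space. Fixing the two endpoints of the broken line is the same as fixing one side of a closed polygon; hence the configuration space $M$ of the thread coincides with the moduli space $\tilde M$ of the closed $(n+1)$-gon whose sides are $l_1 = \dots = l_n = 1$ and $l_{n+1} = l$. Since $l < n$ and $l \notin \mathbb{N}$, condition \eqref{eq4} holds for this polygon, so by Theorem \ref{th21} the space $M = \tilde M$ is an analytic closed orientable manifold of dimension $(n+1)-3 = n-2$, which matches the number of degrees of freedom $k = n-2$.

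Next I would verify the inequality $\dim H_1(M,\mathbb{Q}) > k$, i.e. $b_1(M) > n-2$ (recall $b_1 = \dim H_1(\,\cdot\,,\mathbb{Q})$). This is immediate from the preceding Proposition, which computes $b_1(M)$ in every subinterval of $(0,n-2)$: for $0<l<1$ one has $b_1 = n+4$ when $n = 4,5$ and $b_1 = n$ when $n > 5$; for $1 < l < n-2$ with $l \notin \mathbb{N}$ one has $b_1 = 8$ when $n = 4$ and $b_1 = n$ when $n > 4$. In each case the value strictly exceeds $n-2$ (the borderline check being $n=4$, $1<l<2$, where $8 > 2$, and the generic check being $n > 5$, where $n > n-2$). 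Hence \eqref{eq1} holds throughout the range $0 < l < n-2$, $l \notin \mathbb{N}$.

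Finally, the Hamiltonian \eqref{eqnat} is of natural form $H_2 + H_0$ on $T^*M$ with $H_2$ positive definite. Since $M$ is analytic, closed and satisfies \eqref{eq1}, Corollary 1 applies verbatim and yields that the system admits no complete set of analytic, independent, pairwise commuting first integrals; that is, it is not analytically integrable.

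I do not expect any genuine obstacle here: the result is engineered to be a corollary of the Betti-number computation together with Taimanov's theorem. The only points requiring care are the bookkeeping of the identification $M = \tilde M$ (so that the manifold dimension equals $k = n-2$ and Corollary 1 is invoked with the right dimension), and the confirmation that the inequality $b_1 > n-2$ is \emph{strict} in the small-$n$ exceptional cases rather than only in the generic regime $n > 5$.
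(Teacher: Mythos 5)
Your proposal is correct and follows essentially the same route as the paper: identify the configuration space with the moduli space $\tilde M$ of the closed $(n+1)$-gon with sides $1,\dots,1,l$ (so $\dim M = n-2$ by Theorem \ref{th21}), read off $b_1(M) \geqslant n > n-2$ from the preceding Betti-number Proposition, and invoke Corollary 1. The paper's own proof is just a terser version of this ($\mathrm{dim}(M) = n-2$ and $b_1(M) \geqslant n$), so your additional case-by-case verification, including the $n=4$ exceptional cases, is merely a more explicit rendering of the same argument.
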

\begin{proof}
Indeed, $\mathrm{dim}(M) = n - 2$ and $b_1(M) \geqslant n$.
\end{proof}

Similarly, from Corollary 2, we have
\begin{proposition}
Let $0 < l < n-2$ and $l \notin \mathbb{N}$, and the Hamiltonian function \eqref{eqgyro} of the system is an analytic function, then the system is not integrable in the class of polynomials in $p$ with independent highest degree terms.
\end{proposition}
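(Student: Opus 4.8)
The plan is to treat this statement as the gyroscopic counterpart of the immediately preceding proposition, replacing the appeal to Corollary 1 by an appeal to Corollary 2. The configuration space is again the manifold $M = \tilde M$ associated with the closed $(n+1)$-gon whose sides are $l_1 = \dots = l_n = 1$ and $l_{n+1} = l$, and the Hamiltonian \eqref{eqgyro} is by construction exactly of the form \eqref{eq3}: its quadratic part $H_2$ is the positive definite kinetic energy, $H_1 = \sum_i b_i(q)p_i$ collects the gyroscopic terms, and $H_0$ is the potential. Thus the two structural hypotheses of Corollary 2 — analyticity of $H$ and the decomposition \eqref{eq3} — hold automatically, and the only remaining task is to verify the topological hypothesis \eqref{eq1} for $M$.

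To check \eqref{eq1} I would first record the dimension of $M$. Applying Theorem 2 to the $(n+1)$-gon gives $\dim M = (n+1) - 3 = n - 2$; this integer is what plays the role of ``$n$'' in the statement of Taimanov's theorem and hence in \eqref{eq1}, so the condition to be verified reads $b_1(M) > n - 2$. The first Betti number is already supplied by the preceding proposition, which gives $b_1(M) \geqslant n$ throughout the relevant range $0 < l < n-2$, $l \notin \mathbb{N}$ (indeed $b_1(M) = n$ generically, and larger for small $n$). Since we work in the nontrivial regime $n \geqslant 4$, we have $n > n - 2$, and therefore $b_1(M) \geqslant n > n - 2 = \dim M$, so \eqref{eq1} holds.

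With \eqref{eq1} established, Corollary 2 applies directly. The mechanism is that integrability in the stated class would force the top-degree homogeneous parts of the supposed integrals — the first of which is $H_2$ itself — to be $\dim M$ commuting, almost-everywhere independent first integrals of the geodesic flow of the kinetic-energy metric $H_2$ on $M$, which Taimanov's theorem forbids once \eqref{eq1} holds. Hence the system with Hamiltonian \eqref{eqgyro} is not integrable in the class of polynomials in $p$ with independent highest-degree terms. I expect the argument to be essentially immediate once the preceding Betti-number computation is in hand; the only delicate point is the clash of notation between the ``$n$'' of Taimanov's theorem (the dimension of the configuration manifold, here $n-2$) and the ``$n$'' of the thread (the number of segments). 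Keeping these two roles distinct is precisely what makes the required inequality $b_1(M) > \dim M$ come out correctly, and it is really the whole content of the verification.
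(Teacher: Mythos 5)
Your proposal is correct and follows exactly the paper's argument: the paper proves this proposition by invoking Corollary 2, with condition \eqref{eq1} verified just as for the natural case via $\dim M = n-2$ and $b_1(M) \geqslant n$ from the Betti-number computation. Your additional remarks on the mechanism (the top-degree parts $H_2, F_2^{m_2},\dots$ being commuting integrals of the kinetic-energy geodesic flow) and on the two roles of $n$ simply make explicit what the paper leaves implicit.
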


In other words, the system of a discrete thread between two fixed points cannot be analytically integrable for $l < n - 2$. This holds for a free thread and for a thread in external or internal force fields. In particular, if we have a thread in a gravity field, then this system is not analytically integrable. If we add a magnetic forces to the system, then this system cannot be integrated in the class of polynomials in $p$ with independent highest degree terms.

Note that for large $n$, that is, when the discrete model of a thread is relatively fine, the condition $l < n - 2$ holds for the most part of the distances between the fixed points. This statement should be understood in the following sense. We can assume that the total length of the thread is fixed and equals $1$. In this case, we can rescale the lengths of the segments and assume that each segment has length $1/n$. Therefore, the system cannot be integrable provided that the distance between the fixed points is less than $(n-2)/n$, i.e., the measure of distances for which the system can possibly be integrable tends to zero as $n$ tends to infinity.

For $n = 4$ we have a system with a two-dimensional configuration space and in this case the non-integrability follows directly from the result for natural Hamiltonian systems with two degrees of freedom \cite{kozlov79}. The proof of this result is based on the existence of a large number of unstable periodic solutions. The asymptotic surfaces of these solutions intersect and form a complex net such that the additional first integral has a constant value at all points of this net. Therefore, taking into account the fact that this integral is analytic, we obtain that this function is a constant.

From the above result on the non-integrability of a thread with fixed endpoints we can obtain the non-integrability for more complex systems having this non-integrable thread as a subsystem. To be more precise, let us have a system such that its configuration space is a direct product of $M$, the configuration space of a non-integrable thread, and $K$, a $k$-dimensional analytic manifold. If $b_1(K) \geqslant k - 1$, then analytic Hamiltonian system with configuration space $M \times K$ cannot be integrable. Indeed, from the K{\"u}nneth theorem, we have
$$
b_1(M \times K) = b_1(M) + b_1(K) \geqslant n + k - 1.
$$
Obviously, $\mathrm{dim}(M \times K) = n + k - 2$ and condition \eqref{eq1} holds. As an example we can consider a non-integrable free thread with a $k$-link pendulum attached to one of the moving joints of the thread. The configuration space of the pendulum is a $k$-dimensional torus and $b_1(\mathbb{T}^k) = k$.

In conclusion of the section, we present a geometrical result concerning the dynamics of the thread in the most general case, that is, in the presence of potential and gyroscopic forces.

First, we shortly recall the correspondence between the Hamiltonian and Lagrangian approaches to the dynamics of mechanical systems. Given a Hamiltonian function of the form \eqref{eqgyro}, we can obtain a Lagrangian $L$ by means of the Legandre transformation:
$$
L(q,\dot q) = \dot q \cdot p - H(p,q), \quad \dot q = \frac{\partial H}{\partial p}.
$$
In the new variables $(q, \dot q)$ we have
\begin{align}
\label{lagrfun}
    L(q, \dot q) = L_2(q,\dot q) + L_1(q,\dot q) + L_0(q),
\end{align}
where, again, $L_2(q,\dot q)$ is a quadratic positive definite form in $\dot q$ and $L_1(q, \dot q)$ is linear in $\dot q$. The dynamics on the tangent bundle $TM$ is defined by the corresponding Lagrange equations.

\begin{proposition}
Given a Lagrangian system with Lagrangian \eqref{lagrfun} and an energy level $h > \max\limits_M (-L_0)$, then any two configurations $q_0, q_1 \in M$ of the thread can be connected by a solution with energy $h$ provided that
\begin{align}
\label{ineqfin}
    4(h + L_0)L_2 > L_1^2
\end{align}
for all $(q,\dot q)$ on the corresponding energy level.
\end{proposition}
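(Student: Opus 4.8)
The plan is to reduce the problem of connecting two configurations by a fixed-energy trajectory to a geodesic connectivity problem for an auxiliary Finsler metric, via the Maupertuis--Jacobi principle. First I would compute the energy. Since $L_2$ is homogeneous of degree two and $L_1$ of degree one in $\dot q$, Euler's identity gives
\begin{align*}
E = \dot q \cdot \frac{\partial L}{\partial \dot q} - L = (2L_2 + L_1) - (L_2 + L_1 + L_0) = L_2 - L_0.
\end{align*}
Hence the level $E = h$ is the constraint $L_2 = h + L_0$, and the hypothesis $h > \max_M(-L_0)$ guarantees $h + L_0 > 0$ on all of $M$. This is exactly what is needed for the Jacobi metric below to be positive definite and for the energy surface to project onto the whole of $M$.

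Next I would invoke the Maupertuis--Jacobi principle. On a fixed energy level, the trajectories of \eqref{lagrfun} are, up to time reparametrization, the extremals of the abbreviated action $W = \int p\,dq = \int (2L_2 + L_1)\,dt$, which depends only on the unparametrized curve in $M$. Writing $L_2 = \tfrac12 a_{ij}(q)\dot q^i\dot q^j$ and $L_1 = b_i(q)\dot q^i$, and using $2L_2\,dt = \sqrt{2L_2}\,|dq|_a = \sqrt{2(h+L_0)}\,|dq|_a$ on the energy level, the functional becomes the Randers--Finsler length
\begin{align*}
W = \int \Bigl( \sqrt{\tilde a_{ij}\,dq^i\,dq^j} + b_i\,dq^i \Bigr), \qquad \tilde a_{ij} = 2(h+L_0)\,a_{ij}.
\end{align*}
Here the $1$-form $\beta = b_i\,dq^i$ is globally defined on $M$, because $L_1$ is a globally defined function on $TM$ linear in $\dot q$; thus $W$ is the length functional of the Randers metric $F = |dq|_{\tilde a} + \beta$.

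The key step is to identify condition \eqref{ineqfin} with the positivity (strong convexity) condition for $F$, namely $\|\beta\|_{\tilde a} < 1$, i.e.\ $\tilde a^{ij} b_i b_j < 1$. Indeed, maximizing $L_1^2 = (b_i\dot q^i)^2$ over the energy ellipsoid $\{L_2 = h + L_0\}$ by Cauchy--Schwarz shows that \eqref{ineqfin} holds for all admissible $\dot q$ if and only if $a^{ij}b_ib_j < 2(h+L_0)$, which is precisely $\tilde a^{ij}b_ib_j < 1$. Under this condition $F$ is a genuine, positive-definite (strongly convex) Finsler metric on $M$.

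Finally, since $M$ is a closed, hence complete, manifold (Theorem~\ref{th21}), the Finsler Hopf--Rinow theorem provides a minimizing $F$-geodesic joining $q_0$ and $q_1$. Being a geodesic, it is an extremal of $W$, and by the converse direction of the Maupertuis--Jacobi principle, after reparametrizing it by time so that the energy equals $h$, it is a genuine solution of the Lagrange equations of \eqref{lagrfun} connecting $q_0$ and $q_1$ on the level $E = h$. The main obstacle I anticipate is the careful bookkeeping in the Maupertuis reduction with the gyroscopic term present, together with the verification that \eqref{ineqfin} is exactly the Randers convexity bound; once this is in place, completeness of $M$ and the Finsler Hopf--Rinow theorem close the argument.
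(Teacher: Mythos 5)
Your proposal is correct and follows essentially the same route as the paper: reduce via the Maupertuis principle to the length functional $\int\bigl(2\sqrt{(h+L_0)L_2} + L_1\bigr)\,dt$, observe that \eqref{ineqfin} makes this a genuine Finsler (Randers) length, and conclude with the Finsler Hopf--Rinow theorem on the closed manifold $M$. The only difference is that you spell out the details the paper delegates to the references (the energy computation and the explicit Cauchy--Schwarz verification that \eqref{ineqfin} is exactly the Randers convexity bound $\tilde a^{ij}b_ib_j<1$), which is a welcome elaboration but not a different argument.
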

\begin{proof}
In accordance to the Maupertuis principle, a path $\gamma \colon [t_0, t_1] \to M$ is a trajectory of a solution of the Lagrangian system iff $\gamma(t)$ is a critical point for the functional $F$
$$
F(\gamma) = \int\limits_{t_1}^{t_2} (2\sqrt{(h + L_0(\gamma))L_2(\gamma, \dot\gamma)} + L_1(\gamma, \dot\gamma))\, dt
$$
in the class of all paths of fixed energy $h$ \cite{akn}. If inequality \eqref{ineqfin} holds, then $F$ defines a Finsler length on $M$ \cite{shen1,shen2}. From the Hopf-Rinow theorem for Finsler manifolds \cite{shen1,shen2}, we have that any two points of $M$ can be connected by a Finsler geodesic. This geodesic corresponds to the desirable solution.
\end{proof}

\subsection{Closed threads}
In this section we will consider models (\textit{ii}) and (\textit{iii}). To a large degree they are similar and both these models will be shown to be non-integrable. However, for model (\textit{iii}), we will impose some additional conditions to prove the non-integrability.

Consider a closed $n$-gon assuming that all its sides have the same unit length and $n$ is an odd number $n \geqslant 5$. Then \eqref{eq4} obviously holds. 
We also assume that one of the points of the $n$-gon is fixed and there are $n-1$ masses $m_i$ located in all non-fixed vertices of the $n$-gon (Fig. 2). Note that we allow self-intersections during the motion of the $n$-gon. Similarly to the case of a thread with two fixed points, we assume that all forces acting on the system are potential and the system is Hamiltonian with the Hamiltonian of the form \eqref{eqnat}. 

\begin{proposition}
Let the Hamiltonian function \eqref{eqnat} of the system be an analytic function. Then the system is not analytically integrable.
\end{proposition}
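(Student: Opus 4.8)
The plan is to identify the configuration space $M$ of model (\textit{ii}) with a principal circle bundle over the moduli space $\tilde M$ of Theorem \ref{th21}, and then to read off $b_1(M)=\dim H_1(M,\mathbb{Q})$ from the Betti numbers of $\tilde M$ given in the Corollary to Theorem \ref{th26}. First I would describe $M$ explicitly. Pinning one vertex of the equilateral $n$-gon removes the translations but leaves the rotation about that vertex, so a configuration is precisely a tuple of unit edge vectors $(u_1,\dots,u_n)\in(\mathbb{S}^1)^n$ satisfying the closure relation $\sum_{i=1}^n u_i=0$; that is, $M=\{(u_1,\dots,u_n)\in(\mathbb{S}^1)^n:\sum_{i=1}^n u_i=0\}$. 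Since $n$ is odd, no closed equilateral configuration is collinear, so $0$ is a regular value of the closure map and $M$ is a closed manifold of dimension $n-2$; this is exactly condition \eqref{eq4} for the equilateral $n$-gon. The diagonal rotation $u_i\mapsto R_\theta u_i$ generates a free $SO(2)$-action (no nonzero vector is fixed by a nontrivial rotation) whose quotient is $\tilde M=M/SO(2)$, a closed orientable $(n-3)$-manifold by Theorem \ref{th21}.

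Next I would trivialize the bundle $M\to\tilde M$. The assignment $(u_1,\dots,u_n)\mapsto \arg u_1$ is a smooth map $M\to\mathbb{S}^1$, and selecting in each orbit the unique representative whose first edge points in the standard direction is a global smooth section. A principal circle bundle with a global section is trivial, so $M\cong\mathbb{S}^1\times\tilde M$.

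The first Betti number then follows from the K\"unneth theorem, $b_1(M)=b_1(\mathbb{S}^1)\,b_0(\tilde M)+b_0(\mathbb{S}^1)\,b_1(\tilde M)=1+b_1(\tilde M)$. Writing $n=2r+1$ and reading off \eqref{betti} at $k=1$, for $n\geqslant 7$ (where $1<r-1$) we get $b_1(\tilde M)=C_{n-1}^{1}=n-1$ and hence $b_1(M)=n$, while for $n=5$ (where $1=r-1$) we get $b_1(\tilde M)=2C_{4}^{1}=8$ and $b_1(M)=9$. In either case $b_1(M)>n-2=\dim M$, so condition \eqref{eq1} holds and Corollary 1 applies to the natural analytic Hamiltonian \eqref{eqnat} on $T^*M$, giving the non-integrability.

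The computation is routine once the topology is set up; the only genuine content is the identification $M\cong\mathbb{S}^1\times\tilde M$, and the point to watch is the dimension bookkeeping --- fixing a \emph{vertex} (rather than an entire \emph{side}, as in model (\textit{i})) retains the rotational degree of freedom and so raises the dimension from $n-3$ to $n-2$, precisely the extra circle factor. I would also note that triviality of the bundle is not essential: the Gysin sequence gives $b_1(M)\geqslant b_1(\tilde M)=n-1>n-2$ independently of the Euler class, so the strict inequality in \eqref{eq1} is robust.
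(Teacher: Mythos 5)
Your proof is correct and takes essentially the same route as the paper: identify the configuration space of model (\textit{ii}) as $\mathbb{S}^1 \times \tilde M$, compute $b_1(M) = 1 + b_1(\tilde M)$ via K\"unneth and the Corollary to Theorem \ref{th26}, and conclude by Theorem 1 through Corollary 1. The only differences are presentational: you justify the product structure explicitly and treat $n=5$ separately (where $b_1(\tilde M)=8$ rather than $n-1$), details the paper's three-line proof glosses over.
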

\begin{proof}
The configuration space $M$ of the system is the direct product of a one-dimensional circle and $\tilde M$ and has dimension $n-2$. From Corollary 3 we obtain that $b_1(\tilde M) = n-1$. Therefore, $b_1(M) = b_1(\tilde M) + 1 = n$ and we can apply Theorem 1.
\end{proof}

Similar result holds for the system with gyroscopic forces.

\begin{proposition}
Let the Hamiltonian function \eqref{eqgyro} of the system be an analytic function. Then the system is not integrable in the class of polynomials in $p$ with independent highest degree terms.
\end{proposition}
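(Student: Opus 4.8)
The plan is to run exactly the same argument as in the preceding proposition, only replacing the appeal to Corollary 1 (analytic integrability of natural systems) by the appeal to Corollary 2 (integrability in the class of polynomials in $p$ with independent highest-degree terms); the topological input is literally unchanged. So the first thing I would do is observe that passing from the natural Hamiltonian \eqref{eqnat} to the gyroscopic one \eqref{eqgyro} does not alter the configuration space: it is still $M = S^1 \times \tilde M$, with the circle factor recording the rotation of the whole $n$-gon about the fixed vertex and $\tilde M$ the reduced linkage space of Theorem \ref{th21}. Hence $\dim M = (n-3)+1 = n-2$, exactly as in the natural case.

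Next I would reuse the first Betti number computed in the previous proposition: by Corollary 3 (equilateral odd polygon) one has $b_1(\tilde M) = n-1$, and the K\"unneth formula gives $b_1(M) = b_1(\tilde M) + b_1(S^1) = (n-1)+1 = n$. Since $\dim M = n-2$, this yields $\dim H_1(M,\mathbb{Q}) = n > n-2 = \dim M$, so condition \eqref{eq1} holds for $M$.

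Finally, I would note that the Hamiltonian \eqref{eqgyro} is precisely of the form \eqref{eq3}, consisting of a kinetic term $H_2$, a term $H_1$ linear in $p$, and a potential $H_0$. Therefore Corollary 2 applies directly to the analytic manifold $M$ and yields the claim: the system cannot be integrable in the class of polynomials in $p$ with independent highest-degree terms.

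There is no genuinely new obstacle beyond the one already handled by Corollary 2; the only point worth making explicit is why the gyroscopic term does not interfere. In the class of integrals under consideration, the obstruction comes entirely from the highest-degree homogeneous parts $F_i^{m_i}$, which are themselves commuting first integrals of the geodesic flow of the Jacobi metric associated with $H_2$ alone. The term $H_1$ is of strictly lower degree in $p$ and so cannot contribute to these leading parts, so that Taimanov's theorem (Theorem 1), applied under \eqref{eq1}, rules out the existence of such commuting and independent top-degree integrals, and hence of the full integrals. The slightly delicate bookkeeping --- that the leading parts indeed Poisson-commute and stay independent, which is exactly what lets Theorem 1 bite --- is already packaged into the statement of Corollary 2, so nothing beyond invoking it is needed.
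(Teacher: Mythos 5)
Your proposal is correct and follows essentially the same route as the paper: the paper establishes the topological input in the preceding proposition ($M = S^1\times\tilde M$, $\dim M = n-2$, $b_1(M)=b_1(\tilde M)+1=n$ via Corollary 3 and K\"unneth), and then handles the gyroscopic case exactly as you do, by invoking Corollary 2 in place of Corollary 1 with no new topological work. The only minor imprecision, which you share with the paper's own proof, is the claim $b_1(\tilde M)=n-1$: for $n=5$ Corollary 3 actually gives $b_1(\tilde M)=2C_4^1=8$, but this only strengthens the inequality $b_1(M)>\dim M$ needed for condition \eqref{eq1}.
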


Again, the system has the Hamiltonian function of the form \eqref{eqnat} when we consider a totally free thread, that is, there are no external or internal forces acting on the system, except for the forces of reaction. Also, we can consider a thread in an external force field or a thread with interactions between its elements.

Let us now consider a closed thread without a fixed point, that is, a planar $n$-gon with the sides of unit length and $n$ masses $m_i$ located in the vertices. 

\begin{figure}[h]
    \centering
    \includegraphics[width=0.45\textwidth]{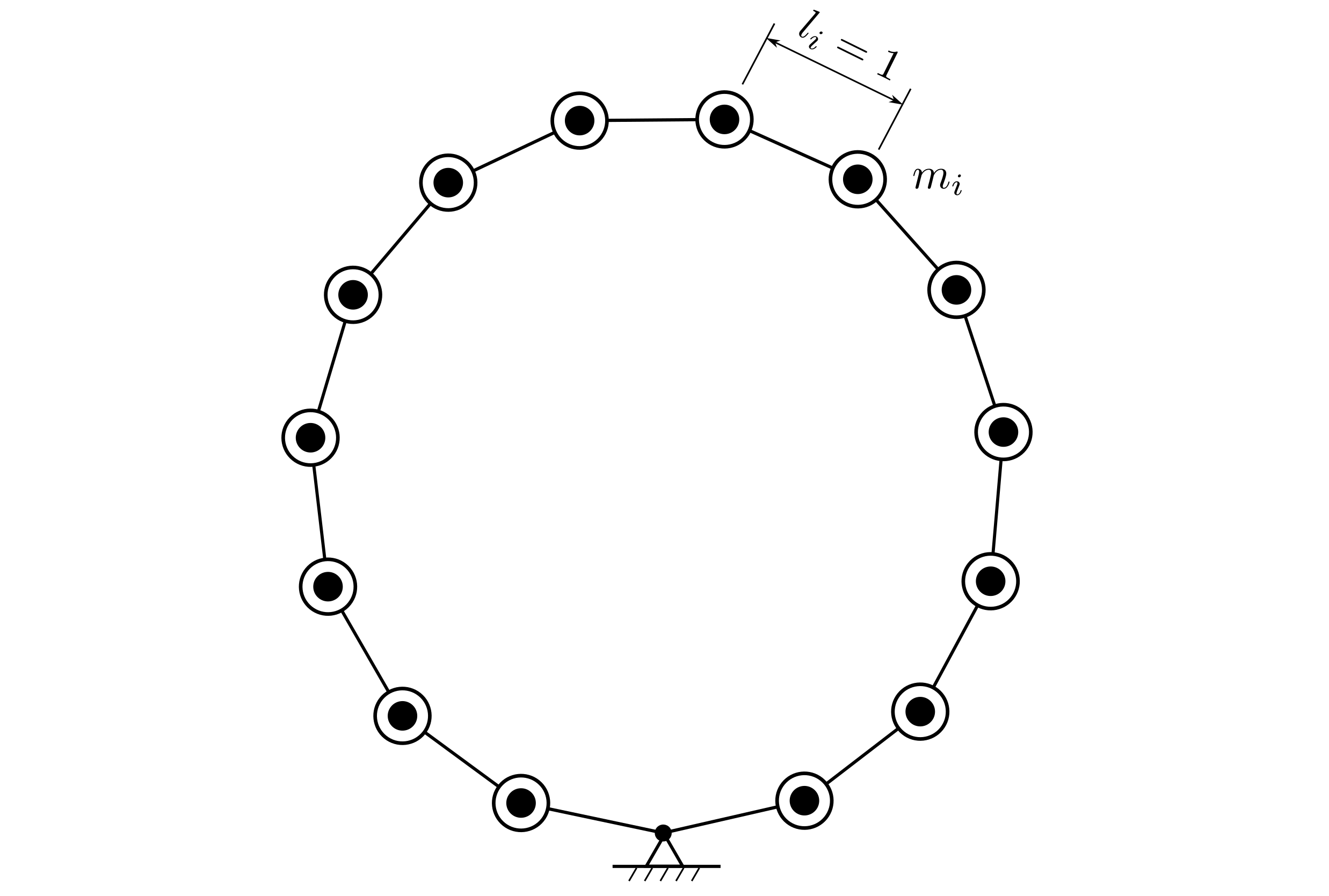}
    \caption{An example configuration of an inextensible and incompressible thread with a fixed point (model (\textit{ii})).}
    \label{fig1}
\end{figure}

The configuration space of this system is not compact and Theorem 1 cannot be applied directly. However, if we assume that there are no external forces acting on the thread, we can consider the reduced system with a compact configuration space. To be more precise, let $x$ and $y$ be the Cartesian coordinates of some mass point of the thread and we consider these coordinates as a part of the set of generalized coordinates. Let the Hamiltonian function of the system has the form \eqref{eqnat}. Since there are no external forces acting on the system, we can conclude that $H$ does not depend on $x$ and $y$. Clearly,
$$
\frac{\partial H}{\partial x} = c_x = \mathrm{const}, \quad \frac{\partial H}{\partial y} = c_y = \mathrm{const}.
$$
After the Routh reduction w.r.t. variables $x$ and $y$ we obtain a Hamiltonian system with the Hamiltonian of the form \eqref{eqgyro}
where $H_1 \equiv 0$ iff $c_x = 0$ and $c_y = 0$. Therefore, similarly to Propositions 2 and 5, we obtain

\begin{proposition}
Let us consider a closed $n$-gon moving on a plane without friction. The lengths of the sides of this $n$-gon equal $1$ ($n$ is an odd number greater than $3$) and masses $m_i$ are located in the vertices of the polygon. Suppose that the only forces acting on the system are the forces of reaction and internal potential forces acting between the elements of the thread. Let $c_x = 0$, $c_y = 0$ and the Hamiltonian function of the reduced system is an analytic function of the form \eqref{eqnat}. Then the reduced system is not analytically integrable.
\end{proposition}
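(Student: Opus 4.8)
The plan is to reduce the free planar $n$-gon by its translational symmetry, so that the noncompact configuration space is replaced by a compact one, and then to apply the topological non-integrability criterion of Corollary 1 exactly as in the natural model-(\textit{ii}) case treated above. The only genuinely new step compared with the fixed-point situation is the passage to the reduced system and the verification that, when $c_x = c_y = 0$, this reduced system is again natural and lives on a closed manifold.

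First I would describe the configuration space. The unit side-vectors $u_1,\dots,u_n \in \mathbb{S}^1$ subject to the closure relation $\sum_{i=1}^n u_i = 0$, together with the Cartesian position $(x,y)$ of one chosen vertex, give a full set of generalized coordinates, so the configuration space is $\mathbb{R}^2 \times \hat M$, where $\hat M = \{(u_1,\dots,u_n) \in (\mathbb{S}^1)^n : \sum_{i=1}^n u_i = 0\}$ has dimension $n-2$. Since no external forces act, $x$ and $y$ are cyclic; Routh reduction removes them and the reduced configuration space is precisely $\hat M$.

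Next I would identify $\hat M$ topologically. The circle group $SO(2)$ acts on $\hat M$ by simultaneous rotation of all the $u_i$, and this action is free because a nontrivial planar rotation fixes no unit vector; its quotient is $\tilde M$. The resulting principal $SO(2)$-bundle $\hat M \to \tilde M$ is trivial, since fixing the reference direction $u_1 = e_1$ singles out a unique representative in each orbit and thus defines a global continuous section. Hence $\hat M \cong \mathbb{S}^1 \times \tilde M$, which is exactly the compact configuration space of the model-(\textit{ii}) case, and in particular $\hat M$ is closed because $\tilde M$ is closed by Theorem \ref{th21}. By the K\"unneth formula and Corollary 3, $b_1(\hat M) = 1 + b_1(\tilde M) = 1 + (n-1) = n$, so that $b_1(\hat M) = n > n-2 = \dim \hat M$ and condition \eqref{eq1} holds.

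Finally, with $c_x = c_y = 0$ the Routh-reduced Hamiltonian carries no term linear in the momenta, hence is natural of the form \eqref{eqnat} and, by hypothesis, analytic. Corollary 1 then applies and shows that the reduced system is not analytically integrable. I expect the main obstacle to lie not in the topology, which is inherited verbatim from the fixed-point case, but in the bookkeeping of the reduction: one must check that the vanishing of the translational momenta genuinely eliminates the gyroscopic term $H_1$ (so that Corollary 1, and not merely Corollary 2, is available) and that the reduced metric on $\hat M$ remains analytic and positive definite.
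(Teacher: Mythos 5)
Your proposal is correct and takes essentially the same route as the paper: Routh reduction with respect to the cyclic coordinates $x,y$, identification of the reduced configuration space with the compact model-(\textit{ii}) space $\mathbb{S}^1 \times \tilde M$, and application of Corollary 1 via $b_1 = n > n-2 = \dim$. The paper compresses all of this into the phrase ``similarly to Propositions 2 and 5''; your extra details (freeness of the $SO(2)$ action, the global section giving bundle triviality, and the check that $c_x = c_y = 0$ kills the gyroscopic term $H_1$) are exactly the steps the paper leaves implicit.
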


If the initial system contains non-zero terms $H_1$ or at least one of the constants $c_x$ or $c_y$ does not equal zero, then the Hamiltonian of the reduced system takes the form \eqref{eqgyro}. Therefore, we obtain the following result.

\begin{proposition}
Let us consider a closed $n$-gon moving on a plane without friction. The lengths of the sides of this $n$-gon equal $1$ ($n$ is an odd number greater than $3$) and masses $m_i$ are located in the vertices of the polygon. Suppose that the Hamiltonian function $H$ of the system is an analytic function and has the form \eqref{eqgyro} and $H$ does not depend on $x$ and $y$.  Then the reduced system is not integrable in the class of polynomials in $p$ with independent highest degree terms.
\end{proposition}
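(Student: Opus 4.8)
The plan is to mirror the argument of Proposition 7, replacing the natural-system Corollary 1 by the gyroscopic Corollary 2. Since $H$ is independent of $x$ and $y$, the conjugate momenta $p_x,p_y$ are conserved; fix them at arbitrary values $p_x=c_x$, $p_y=c_y$. First I would carry out the reduction with respect to the cyclic pair $(x,y)$. The reduced configuration space of a free closed $n$-gon is the space of edge directions
\[
M=\Bigl\{(u_1,\dots,u_n)\in(\mathbb S^1)^n:\ \textstyle\sum_{i=1}^n u_i=0\Bigr\},
\]
a principal $SO(2)$-bundle over $\tilde M$ which admits a global section (normalize, say, the direction of the first edge) and is therefore trivial: $M\cong\mathbb S^1\times\tilde M$. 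By Theorem~\ref{th21}, $\tilde M$ is a closed orientable analytic manifold of dimension $n-3$, so $M$ is closed, orientable and analytic of dimension $n-2$, and the reduced phase space is $T^*M$.

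Next I would check that the reduced Hamiltonian retains the form \eqref{eqgyro}. Regarding $H_2$ as a positive-definite quadratic form in the full momentum $(p_1,\dots,p_{n-2},p_x,p_y)$ and substituting $p_x=c_x$, $p_y=c_y$, its part purely quadratic in the surviving momenta is the restriction of $H_2$ to the subspace $\{p_x=p_y=0\}$, hence again positive definite; the cross terms pairing $(p_x,p_y)$ with the other momenta, together with the pre-existing linear term $H_1$, assemble into a single term linear in $p$, and the remaining $p$-independent contributions join the potential. Thus the reduced system is an analytic Hamiltonian system on $T^*M$ of the form $\hat H=\hat H_2+\hat H_1+\hat H_0$ as in \eqref{eqgyro}, for every choice of $c_x,c_y$.

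It then remains to verify the topological hypothesis \eqref{eq1} for $M$. By the K{\"u}nneth formula, $b_1(M)=b_1(\mathbb S^1)+b_1(\tilde M)=1+b_1(\tilde M)$, and Corollary 3 gives $b_1(\tilde M)=n-1$ for $n>5$ and $b_1(\tilde M)=8$ for $n=5$. In either case $b_1(M)\geqslant n>n-2=\dim M$, so $\dim H_1(M,\mathbb Q)>\dim M$ and \eqref{eq1} holds. Corollary 2, applied to the reduced system, then shows that it cannot be integrable in the class of polynomials in $p$ with independent highest-degree terms, which is the assertion.

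The step demanding actual care---and hence the main obstacle---is the bookkeeping in the second paragraph: one must confirm that eliminating the cyclic momenta reproduces precisely the structure \eqref{eqgyro}, with the quadratic part still positive definite and the new linear (gyroscopic) part genuinely of first degree in $p$, so that Corollary 2 is literally applicable. Compactness, orientability and analyticity of $M$, together with the Betti-number count, follow at once from the auxiliary results already in hand.
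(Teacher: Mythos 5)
Your proposal is correct and follows essentially the same route as the paper: Routh reduction with respect to the cyclic coordinates $(x,y)$, identification of the reduced configuration space with $\mathbb{S}^1\times\tilde M$, the Betti-number count $b_1 \geqslant n > n-2 = \dim$ via Corollary 3 and K{\"u}nneth, and then Corollary 2. The paper leaves most of these steps implicit (it simply notes the reduced Hamiltonian has the form \eqref{eqgyro} and invokes the analogy with the earlier propositions), so your version merely spells out the same argument in more detail, including the bundle triviality and the bookkeeping showing the reduction preserves the structure \eqref{eqgyro}.
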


\subsection{Threads with segments of different length}
Everywhere above we assumed that the segments of the discrete thread are of the same length. Taking into account possible internal forces acting between the segments, we can conclude that this setting allows one to model a broad range of real-life systems. However, for the sake of completeness, we will consider the cases when the segments have different length.

First, we will consider model (\textit{i}). Let $l_i > 0$, $1 \leqslant i \leqslant n$ be the lengths of segments and $l > 0$ be the length between the fixed points. As above, we assume that \eqref{eq4} holds. Inequality $b_1(M) \geqslant n-1$ plays the key role in the proofs of Propositions 2 and 3. From Theorem 3 we have that $b_1(M) \geqslant a_1$. Therefore, if $a_1 \geqslant n-1$, then the system is not integrable. Let $l$ or $l_j$ (for some $1 \leqslant j \leqslant n$) be the side of the maximal length. If there are at least $n-1$ lengths $l_{i_k}$, $1 \leqslant k \leqslant n-1$ (different from the maximal length) such that the pair of lengths $l_{i_k}$ and $l_j$ (or $l$) is a short subset, then the corresponding system is not integrable in the sense of Propositions 2 and 3.

Absolutely similar conditions can be formulated for models (\textit{ii}) and (\textit{iii}). For these cases we have to obtain $b_1(\tilde M) \geqslant n-2$ where $n$ is the number of segments in the thread. Therefore, there should be at least $n-2$ lengths $l_{i_k}$, $1 \leqslant k \leqslant n-2$ (different from the maximal length $l_j$) such that the pair of lengths $l_{i_k}$ and $l_j$ is a short subset.

\section{Topological entropy}

First, let us recall the definition of the topological entropy (see, for instance, \cite{katok1997introduction}). Let $X$ be a compact metric space with a metric $d$ and $f \colon X \to X$ be a continuous map. Consider the following sequence of metrics
$$
d_n(x,y) = \max\limits_{0 \leqslant i \leqslant n-1} d(f^i(x), f^i(y)).
$$
Consider an open ball $B(x,\varepsilon,n) = \{ y \in X \colon d_n(x,y) < \varepsilon \}$. A set $U \subset X$ is an $(n,\varepsilon)$-covering if $X \subset \bigcup_{x \in E} B(x,\varepsilon,n)$. Let $S(\varepsilon,n)$ be the minimal number of elements in an $(n,\varepsilon)$-covering. Put 
$$
h(f,\varepsilon) = \limsup\limits_{n \to \infty} \frac{1}{n} \log S(f,\varepsilon,n).
$$
Then, the topological entropy of the map $f$ is defined as 
$$
h(f) = \lim\limits_{\varepsilon \to 0} h(f,\varepsilon).
$$
The definition of the topological entropy for flows can be expressed in terms of the topological entropy for maps: let us have a flow $\varphi^t \colon \mathbb{R} \times X \to X$, then we put $f = \varphi^1$.

\begin{remark}
This definition is based on a metric structure on $X$. However, it can be shown that this definition does not depend on the choice of the metric, provided that all metrics define the same topology on $X$. A definition that is not based on the metric structure has been given in \cite{adler1965topological}. The definition given above was first given in \cite{dinaburg1971relations}. In addition, the first definition of  entropy for a dynamical system has been formulated by A.N. Kolmogorov \cite{kolmogorov1959entropy}.
\end{remark}

For a geodesic flow on a Riemannian manifold the topological entropy can be defined as follows \cite{ma1997topological}:
$$
h = \lim\limits_{L \to \infty} \frac{1}{L} \log \int\limits_{M \times M} n_L(x,y) dxdy,
$$
where $n_L(x,y)$ is the number of geodesics of lengths no more than $L$ connecting points $x$ and $y$ of manifold $M$.

Positivity of the topological entropy usually corresponds to the complexity of the dynamics of a system. It can also imply the chaotic behavior of a system \cite{downarowicz2014positive}. At the same time, the positivity of topological entropy is not equivalent to the ergodicity and there are non-ergodic systems with a positive topological entropy.

Let us have a geodesic flow on a closed Riemannian manifold. It is known that for some manifolds it is impossible to find a metric with zero topological entropy, that is, for any given smooth metric, the entropy is positive. For instance, if the fundamental group of the manifold is a group of exponential growth, then the topological entropy of the geodesic flow is positive. The details can be found in \cite{dinaburg1971relations,manning1979topological}, where the problem of existence of a metric with zero entropy is considered.

In addition, the following has been proven in \cite{dinaburg1971relations}.
\begin{theorem}
If there exists a metric of negative sectional curvature on a closed manifold, then the geodesic flow on this manifold has a positive topological entropy for any metric.
\end{theorem}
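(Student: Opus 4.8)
The plan is to reduce the claim to the exponential-growth criterion already recorded just above, namely that a closed manifold whose fundamental group has exponential growth carries a geodesic flow of positive topological entropy for \emph{every} smooth metric (see \cite{dinaburg1971relations,manning1979topological}). Since that criterion depends only on the abstract group $\pi_1(M)$ and not on any particular metric, it suffices to prove the following purely geometric implication: if a closed manifold $M$ admits \emph{some} metric $g_0$ of negative sectional curvature, then $\pi_1(M)$ has exponential growth. The negatively curved metric is used only to establish this group-theoretic fact; once it is in hand, the conclusion for an arbitrary metric $g$ on $M$ is immediate.

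To carry out the geometric step, fix the negatively curved metric $g_0$ and suppose its sectional curvatures satisfy $K \leqslant -a^2 < 0$. By the Cartan--Hadamard theorem the universal cover $(\tilde M, \tilde g_0)$ is diffeomorphic to $\mathbb{R}^{\dim M}$, and $\pi_1(M)$ acts on it freely, properly discontinuously and cocompactly by isometries (deck transformations). By the G\"unther--Bishop volume comparison theorem, the volume of a metric ball of radius $r$ in $\tilde M$ is bounded below by the corresponding volume in the simply connected space form of constant curvature $-a^2$, and therefore grows at least like $e^{(\dim M - 1) a r}$; in particular, the volume of balls in $\tilde M$ grows exponentially in $r$.

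Next I would invoke the Milnor--\v{S}varc lemma: because $\pi_1(M)$ acts properly and cocompactly by isometries on the proper geodesic space $\tilde M$, the group is quasi-isometric to $\tilde M$, so its word-growth function and the volume-growth function of $\tilde M$ have the same exponential type. The exponential volume growth established above thus forces $\pi_1(M)$ to have exponential growth. Since this is an invariant of the isomorphism type of $\pi_1(M)$ alone, it holds no matter which metric we later place on $M$. Applying the exponential-growth criterion to an arbitrary smooth metric $g$ then yields positive topological entropy, as required.

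The main obstacle is the passage from the \emph{geometric} (volume) growth of the universal cover to the \emph{group-theoretic} growth of $\pi_1(M)$: this is exactly the content of the Milnor--\v{S}varc lemma, and it is what makes the conclusion metric-independent. One must check that the $\pi_1(M)$-action is proper and cocompact so that the lemma applies, and that the quasi-isometry between $\pi_1(M)$ and $\tilde M$ respects the exponential/subexponential dichotomy. Equivalently, one can bypass explicit group growth and use Manning's inequality $h_{\mathrm{top}}(g) \geqslant \lambda_{\mathrm{vol}}(g)$ together with the fact that positivity of the volume entropy $\lambda_{\mathrm{vol}}$ is a quasi-isometry invariant of $\tilde M$, hence independent of $g$; the two formulations rest on the same comparison.
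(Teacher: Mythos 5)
Your proposal is correct. Note, however, that the paper does not actually prove this statement: it is quoted as a known theorem of Dinaburg \cite{dinaburg1971relations}, so there is no internal proof to compare against. What you have done is supply the standard argument behind the citation, and it is the natural one given the context: the paragraph preceding the theorem in the paper already records the metric-independent criterion (due to Dinaburg and Manning \cite{dinaburg1971relations,manning1979topological}) that exponential growth of $\pi_1(M)$ forces positive topological entropy for every metric, and your contribution is the purely geometric step that a closed negatively curved manifold has $\pi_1$ of exponential growth --- i.e., Milnor's theorem, proved exactly as you outline via Cartan--Hadamard, the G\"unther volume comparison on the universal cover, and the Milnor--\v{S}varc correspondence between orbit growth and word growth. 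Two small points worth making explicit: the uniform bound $K \leqslant -a^2 < 0$ is not an extra hypothesis but follows from compactness of $M$ (the curvature is a continuous function on a compact set of $2$-planes); and the statement implicitly assumes $\dim M \geqslant 2$, since sectional curvature is vacuous in dimension one. With those remarks your argument is complete and is, in substance, the proof the cited literature gives.
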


It is known that there exists a metric of negative curvature on any two-dimensional closed manifold of genus greater than one \cite{spivak1970comprehensive}.

In particular, for the previous discrete models of threads, the topological entropy can be proven to be positive when the thread is moving by inertia. To be more precise, given a thread with two fixed endpoints and $n = 4$, the dynamics is described by the geodesic equation provided the motion of the thread is free (i.e., the only forces acting on the thread are the forces of reaction). The metric is given by the kinetic energy of the system and the genus of the configuration manifold is greater than one. Therefore, the topological entropy is strictly positive.

Similar result holds for model (\textit{iii}). However, it is worth mentioning that results about the positivity of geodesic flows can only be applied here for the cases where the constants of the Noetherian integrals equal zero. 

To be more precise, we can conclude that the following results hold for two-dimensional configuration spaces.
\begin{proposition}
Consider a thread with fixed endpoints. Let $n = 4$ and $l_i = 1$ for all $i$. Let the distance between the fixed points be $l < 2$ and condition \eqref{eq4} holds. Suppose that there are massive points with masses $m_i$ located in the joints of the thread and that the only forces acting on the system are the forces of reaction. Then, the topological entropy of this system is positive.
\end{proposition}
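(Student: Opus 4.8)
The plan is to reduce the claim to the theorem on negatively curved manifolds (Theorem 5) by recognizing the free dynamics as a geodesic flow on a closed surface of genus greater than one. First I would observe that, since the only forces acting are the reaction forces, the potential vanishes, $H_0 \equiv 0$, and the Hamiltonian reduces to the pure kinetic energy $H = H_2$. The positive-definite quadratic form $H_2$ then equips the configuration space $M$ with a Riemannian metric, and by the Maupertuis principle (equivalently, the standard fact that for a natural system with zero potential the trajectories are geodesics) the orbits on a fixed positive energy level $H = h > 0$ coincide, up to time reparametrization, with the geodesics of this metric. Hence the flow of the mechanical system on an energy level is the geodesic flow of $(M, H_2)$, and it suffices to prove that this geodesic flow has positive topological entropy.

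Next I would pin down the topology of $M$. For model (\textit{i}) with $n = 4$ the configuration space is the space $\tilde M$ of the closed pentagon with side lengths $1,1,1,1,l$. Since $0 < l < 2$ and $l \notin \mathbb{N}$, condition \eqref{eq4} holds for this $(n+1)$-gon, so by Theorem~2 the space $\tilde M = M$ is a closed, orientable, analytic manifold of dimension $(n+1)-3 = 2$, that is, a closed orientable surface. Applying Proposition~1 with $n = 4$ and $0 < l < 2$ gives $b_1(M) = 8$ (the value is the same in both subcases $0 < l < 1$ and $1 < l < 2$). For a closed orientable surface one has $b_1 = 2g$, so the genus is $g = 4 > 1$.

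It then remains to supply the curvature input. Every closed orientable surface of genus greater than one admits a metric of constant negative curvature, so by Theorem~5 the geodesic flow on $M$ has positive topological entropy for \emph{any} Riemannian metric, and in particular for the kinetic-energy metric $H_2$. Combining this with the reduction of the first step yields the positivity of the topological entropy of the free thread.

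The step requiring the most care is the first: one must check that the constrained free motion, after restriction to a regular energy level and reparametrization, really is the geodesic flow of a \emph{smooth} metric on the compact surface $M$, and that positivity of topological entropy is preserved under this reparametrization and under passage between energy levels (which for a geodesic flow amounts only to a rescaling of speed). The remaining ingredients are purely topological bookkeeping already provided by Theorem~2 and Proposition~1, together with the classical existence of hyperbolic metrics on higher-genus surfaces that feeds into Theorem~5.
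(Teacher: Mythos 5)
Your proposal is correct and follows essentially the same route as the paper: the paper's (very terse) proof simply notes that the genus of the configuration surface is $b_1/2 = 4 > 1$, relying on the surrounding discussion that the free motion is the geodesic flow of the kinetic-energy metric, that higher-genus surfaces carry negatively curved metrics, and on Theorem~5. Your write-up just makes these implicit steps (Theorem~2 for the surface structure, Proposition~1 for $b_1 = 8$ in both subcases, and the Maupertuis reduction) explicit.
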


\begin{proof}
It is known that the genus $g$ of the surface equals $b_1/2$, that is, for our surface we have $g = 4$.
\end{proof}

\begin{proposition}
Consider a closed thread. Let $n = 5$ and $l_i = 1$ for all $i$. Suppose that there are five massive points with masses $m_i$ located in the joints of the thread and that the only forces acting on the system are the forces of reaction. Also suppose that the constants of three Noetherian first integrals equal zero. Then, the topological entropy of the system (after the Routh reduction) is positive.
\end{proposition}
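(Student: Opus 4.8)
The plan is to follow the same route as the proof of Proposition 8: reduce the free pentagon to a geodesic flow on a closed surface and then read off the genus of that surface. Because the thread is free, the symmetry group is the full group $SE(2)$ of orientation-preserving isometries of the plane, which furnishes the three Noetherian integrals (two components of linear momentum and the angular momentum). The first step is to carry out the Routh reduction with respect to this whole group. The crucial observation, already made in the discussion preceding Proposition 6, is that reduction performed at the zero value of all three momenta introduces no gyroscopic term: the reduced Hamiltonian retains the purely kinetic form \eqref{eqnat} with $H_0 \equiv 0$. Consequently the reduced dynamics is exactly the geodesic flow of the reduced kinetic-energy metric.

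The second step is to identify the reduced configuration space. Reducing the free closed $n$-gon by $SE(2)$ yields precisely the space $\tilde M$ of \eqref{eqMl}, the moduli space of the closed unit pentagon up to planar isometries. By Theorem \ref{th21}, $\tilde M$ is a closed orientable analytic manifold of dimension $n - 3 = 2$, i.e. a closed orientable surface.

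The third step is the Betti-number computation that fixes the genus. Here $n = 5$, so $n = 2r+1$ with $r = 2$, and Corollary 3 gives $b_1(\tilde M) = 2\,C_{n-1}^{\,r-1} = 2\,C_{4}^{1} = 8$ (consistent with $b_0 = b_2 = 1$). For a closed orientable surface the genus satisfies $g = b_1/2$, whence $g = 4 > 1$. Since every closed orientable surface of genus greater than one carries a metric of negative sectional curvature, the theorem stated above (negative sectional curvature forces positive topological entropy for every metric) applies to $\tilde M$ and, in particular, to the reduced kinetic-energy metric produced in the first step. This gives the desired positivity of the topological entropy of the reduced flow.

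I expect the genuine obstacle to lie in the first step rather than in the final, essentially formal, genus argument. Reduction by the non-abelian group $SE(2)$ could in general leave a magnetic (gyroscopic) term of type \eqref{eqgyro} on the quotient, and it is exactly the hypothesis that all three integrals vanish that annihilates this term and restores a bona fide geodesic flow. Making this precise --- and simultaneously checking that the reduced metric is smooth, which uses that the $SE(2)$-action is free and proper once the collinear configurations forbidden by \eqref{eq4} are excluded --- is the delicate point; everything downstream is the Betti-number count together with the cited relation between genus and entropy.
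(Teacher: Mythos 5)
Your proposal is correct and matches the paper's intended argument: the paper states this proposition without a separate proof, relying on the same chain used for Proposition 8 and the preceding discussion --- Routh reduction at zero values of the three Noetherian integrals yields a geodesic flow on the pentagon moduli space $\tilde M$, which by Corollary 3 has $b_1 = 2C_4^1 = 8$, hence genus $4 > 1$, so Theorem 5 together with the existence of negative-curvature metrics on higher-genus surfaces gives positive entropy. Your added care about the absence of gyroscopic terms at zero momentum and the smoothness of the reduced metric only makes explicit what the paper leaves implicit.
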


Some results on the existence of a metric corresponding to zero topological entropy for low dimensional manifolds can be found in \cite{paternain1991entropy,paternain2000differentiable,paternain2006zero}.

In particular, it was proven in \cite{paternain2006zero} that, given a four-dimensional closed manifold $M$ with an infinite fundamental group, it is only possible to find a metric on $M$ with zero topological entropy when the Euler characteristic of $M$ is zero.

As a corollaries from this result, we obtain the following.

\begin{proposition}
Consider a thread with fixed endpoints. Let $n = 6$ and $l_i = 1$ for all $i$. Let the distance between the fixed points is $l < 4$, $l \notin \mathbb{N}$. Suppose that there are massive points with masses $m_i$ located in the joints of the thread and the only forces acting on the system are the forces of reaction. Then, the topological entropy of this system is positive.
\end{proposition}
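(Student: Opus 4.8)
The plan is to realize the free motion of the thread as a geodesic flow on the configuration manifold $M$ and then to invoke the four-dimensional result of Paternain \cite{paternain2006zero} quoted above. Since the thread moves by inertia, we are in the situation $H_0\equiv 0$, so by the Maupertuis principle the trajectories on a fixed positive energy level coincide, up to a constant reparametrization, with the geodesics of the kinetic-energy metric on $M$; positivity of topological entropy is insensitive to such a reparametrization. Hence it suffices to prove that $M$ admits \emph{no} Riemannian metric of zero topological entropy. By Paternain's theorem this is guaranteed once we verify that $M$ is a closed four-manifold with \emph{infinite} fundamental group and \emph{nonzero} Euler characteristic.

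First I would identify $M$. A thread with $n=6$ unit segments whose endpoints are fixed at distance $l$ has the configuration space of the closed $7$-gon with side lengths $l_1=\dots=l_6=1$, $l_7=l$; since $l<4$ and $l\notin\mathbb{N}$, condition \eqref{eq4} holds, and Theorem \ref{th21} gives that $M$ is a closed orientable analytic manifold of dimension $7-3=4$. Next I would read off the rational homology from Theorem \ref{th26}, applied to this $7$-gon, so that $b_k(M)=a_k+a_{4-k}$, where $a_k$ counts the short $(k+1)$-element subsets containing the longest side. Here one must split into cases according to which side is longest and count short subsets by separating those that do and do not contain the side $l$. If $l<1$ the longest side is a unit side and one finds $a_0=1,\ a_1=6,\ a_2=15,\ a_3=a_4=0$; if $1<l<2$ the longest side is $l$ and the same values result. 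In both cases $(b_0,\dots,b_4)=(1,6,30,6,1)$ and $\chi(M)=1-6+30-6+1=20$. If $2<l<4$ the shortness condition for the longest side $l$ admits at most one unit side, giving $a_0=1,\ a_1=6,\ a_k=0$ for $k\geqslant 2$, hence $(b_0,\dots,b_4)=(1,6,0,6,1)$ and $\chi(M)=-10$.

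In every case $b_1(M)=6>0$, so $H_1(M;\mathbb{Z})$ is infinite and therefore $\pi_1(M)$, of which it is the abelianization, is infinite as well; and in every case $\chi(M)\neq 0$. Applying Paternain's theorem to the closed four-manifold $M$ then shows that no metric on $M$ has zero topological entropy, so in particular the kinetic-energy metric does not, and the free motion on the energy level has positive topological entropy. The main obstacle I anticipate is precisely the homology bookkeeping of the previous paragraph: one must correctly determine the longest side and enumerate the short subsets in each range of $l$ to be certain that $\chi(M)$ never vanishes (it equals $20$ for $l<2$ and $-10$ for $2<l<4$), since it is exactly the nonvanishing of the Euler characteristic, together with the infinitude of $\pi_1(M)$, that drives the conclusion.
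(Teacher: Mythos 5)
Your proposal is correct and follows essentially the same route as the paper: invoke the Paternain--Petean result for closed four-manifolds, note that $\pi_1(M)$ is infinite because its abelianization $H_1(M;\mathbb{Z})$ is infinite ($b_1=6$), and verify $\chi(M)\neq 0$ by computing the Betti numbers from the short-subset count of Theorem~3, obtaining $\chi=20$ for $l<2$ and $\chi=-10$ for $2<l<4$. Your write-up is in fact slightly more complete than the paper's, which only carries out the case $1<l<4$ explicitly and dismisses $0<l<1$ as analogous.
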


\begin{proof}
Consider the case when $1 < l < 4$. For the Euler characteristic we have $\chi = b_0 - b_1 + b_2 - b_3 + b_4$ and $b_0 = b_4$, $b_1 = b_3$. Therefore, $b_0 = a_0 + a_4 = 1$, $b_1 = a_1 + a_3 = 6$ and $b_2 = 2a_2 = 30$ (for $1 <l < 2$) or $b_2 = 0$ (for $2 < l < 4$). We see that $\chi \ne 0$. The case $0 < l < 1$ can be considered analogously.
\end{proof}

Now we consider model (\textit{iii}). 

\begin{proposition}
Consider a closed thread. Let $n = 7$ and $l_i = 1$ for all $i$. Suppose that there are massive points with masses $m_i$ located in the joints of the thread and the only forces acting on the system are the forces of reaction. Also suppose that the constants of three Noetherian first integrals equal zero. Then, the topological entropy of the system (after the Routh reduction) is positive.
\end{proposition}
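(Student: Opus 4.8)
The plan is to reduce the free equilateral heptagon to a geodesic flow on its shape space and then invoke the four-dimensional entropy-rigidity result of \cite{paternain2006zero}, in direct analogy with the proof of Proposition~11 but now for model (\textit{iii}). First I would record the kinematics: for $n=7$ with all $l_i=1$ condition \eqref{eq4} holds (the sum of seven terms $\pm 1$ is odd, hence nonzero), so by Theorem~\ref{th21} the reduced configuration space $\tilde M$ is a closed orientable analytic manifold of dimension $n-3=4$. Since the only forces are reaction forces, the system is natural with $H_0\equiv 0$ and carries the three Noetherian integrals generated by the $SE(2)$-symmetry (two linear momenta and the angular momentum). Under the standing hypothesis that all three constants vanish, the Routh reduction produces no gyroscopic (Coriolis) term and no amended potential, because the linear-in-momentum contribution is proportional to the momentum values. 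The reduced Hamiltonian is therefore again of the natural form \eqref{eqnat} with zero potential, so on a fixed positive energy level its trajectories are exactly the geodesics of the kinetic (Jacobi) metric on $\tilde M$.

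Next I would assemble the topological invariants that feed into the entropy criterion. Here $n=2r+1$ with $r=3$ and all $l_i=1$, so the Betti numbers of $\tilde M$ are given by formula \eqref{betti}. With $n-1=6$ and $r-1=2$ this gives
\begin{align*}
b_0 = 1,\quad b_1 = 6,\quad b_2 = 30,\quad b_3 = 6,\quad b_4 = 1,
\end{align*}
consistent with Poincar\'e duality on a closed orientable four-manifold. Hence the Euler characteristic is $\chi(\tilde M)=b_0-b_1+b_2-b_3+b_4 = 1-6+30-6+1 = 20 \neq 0$. Moreover $b_1=6>0$ already forces $\pi_1(\tilde M)$ to be infinite; in fact the fundamental-group presentation \eqref{fund}, applied with all unit lengths (every pair $\{k,7\}$ and every triple $\{i,j,7\}$ is short), yields $\pi_1(\tilde M)\cong \mathbb{Z}^6$, which is manifestly infinite.

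Finally I would apply the cited result: for a closed four-manifold with infinite fundamental group, a metric of zero topological entropy can exist only if the Euler characteristic vanishes. Since $\tilde M$ is four-dimensional, $\pi_1(\tilde M)$ is infinite, and $\chi(\tilde M)=20\neq 0$, every Riemannian metric on $\tilde M$ — in particular the Jacobi metric of the reduced system obtained in the first step — has strictly positive topological entropy, which proves the claim.

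The step I expect to require the most care is the reduction itself: one must check that passing to zero values of all three momenta genuinely eliminates the gyroscopic terms, so that the reduced flow is an honest geodesic flow on all of $\tilde M$ rather than a magnetic or forced flow, and that the diagonal $SO(2)$-action on the edge directions is free (a nontrivial rotation cannot fix every edge vector), so the reduced kinetic metric is a smooth Riemannian metric on the \emph{closed} manifold $\tilde M$ without orbifold singularities. Only once this is secured does the criterion of \cite{paternain2006zero}, stated for geodesic flows, apply. The remaining ingredients — the Betti-number computation from \eqref{betti} and the infinitude of $\pi_1$ from \eqref{fund} — are routine consequences of the topology results recalled in Section~2.
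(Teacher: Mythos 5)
Your proposal is correct and follows essentially the same route as the paper: the paper's proof is precisely the Euler characteristic computation via Corollary 3 ($\chi = 2(1 - 6 + C_6^2) = 20 \neq 0$), combined with the surrounding remarks that the fundamental group is infinite (since its abelianization $H_1$ is infinite) and that the Paternain--Petean result \cite{paternain2006zero} then forces positive entropy for the Jacobi metric of the reduced geodesic flow. Your additional care about the Routh reduction at zero momentum values and the explicit identification $\pi_1(\tilde M) \cong \mathbb{Z}^6$ are consistent with (and slightly more detailed than) what the paper states.
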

\begin{proof}
From Corollary 3, we have $\chi = 2(1 - 6 + C_6^2) \ne 0$.
\end{proof}

 Note that the fundamental groups of these systems are clearly infinite because their abelianizations, the first homology groups, are infinite.

Note that a result similar to Proposition 12 holds for model (\textit{ii}) if we assume that the only Noetherian first integral equals zero. However, if we do not want to consider the reduced system and, at the same time, we want to obtain a configuration space of dimension $4$, then there should be six segments in the closed contour. If we assume that these segments are of the same length, then the configuration space will not be a smooth manifold.

For an arbitrarily large $n$ (i.e., for the cases when the thread is modeled by a large number of segments), it is also possible to prove that the entropy is positive based on Theorem 4.

\begin{proposition}
Consider a thread with fixed endpoints. Let $n > 5$ and $l_i = 1$ for all $i$. Let the distance between the fixed points is $l$ and $n-4 < l < n-2$, $l \notin \mathbb{N}$. Suppose that there are massive points with masses $m_i$ located in the joints of the thread and that the only forces acting on the system are the forces of reaction. Then, the topological entropy of this system is positive.
\end{proposition}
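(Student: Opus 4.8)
The plan is to interpret the free motion of the thread as a geodesic flow and then to read off positivity of entropy from a purely topological feature of the configuration space, namely the growth of its fundamental group. Since both endpoints are fixed, the system has neither translational nor rotational symmetry and no reduction is needed: the configuration space is exactly the closed analytic manifold $\tilde M$ of the $(n+1)$-gon with sides $l_1=\dots=l_n=1$, $l_{n+1}=l$, of dimension $n-2$, and the free thread (no external forces, so $H_0\equiv 0$) traces out geodesics of the kinetic-energy (mass) metric on $\tilde M$. Hence it suffices to prove that this geodesic flow has positive topological entropy; I would in fact establish the stronger statement that the entropy is positive for \emph{every} smooth metric on $\tilde M$, so that the particular mass metric is covered automatically.

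The criterion I would use is the one recalled just before Theorem~4: if $\pi_1(\tilde M)$ has exponential growth, then the geodesic flow has positive topological entropy for any metric (the Manning-type estimate relating topological entropy to the volume/fundamental-group growth rate, \cite{dinaburg1971relations,manning1979topological}). So the whole argument reduces to computing $\pi_1(\tilde M)$ and checking that it grows exponentially. The hypotheses of Theorem~5 are met after relabelling: the polygon has $N=n+1$ sides with $N\geq 7$ (since $n>5$), all of unit length except the longest side $l$, and $1\leq l<N-1=n$. The relations in \eqref{fund} are governed by which subsets containing the long side are short. With total perimeter $n+l$, a pair $\{k,\,\mathrm{long}\}$ is short iff $l<n-2$, and a triple $\{i,j,\,\mathrm{long}\}$ is short iff $l<n-4$. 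In the window $n-4<l<n-2$ the first condition holds for every unit side while the second fails for every pair, so no generator $a_k$ is killed and no commutator relation $[a_i,a_j]$ is imposed, giving
$$
\pi_1(\tilde M)\cong\langle a_1,\dots,a_n\rangle,
$$
the free group on $n\geq 6$ generators, which has exponential growth.

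Applying the growth criterion then finishes the proof. I expect the main (and essentially only) obstacle to be the combinatorial bookkeeping of the short subsets, i.e. verifying precisely that the window $n-4<l<n-2$ is the one that makes $\pi_1$ free; the assumption $l\notin\mathbb{N}$ is needed so that \eqref{eq4} holds and $\tilde M$ is a genuine closed manifold. It is worth noting that Theorem~4 itself cannot be invoked directly here: a free group has cohomological dimension one, so $\tilde M$ (of dimension $n-2\geq 4$) is not aspherical and admits no metric of negative sectional curvature. This is why the appropriate tool is the fundamental-group growth estimate rather than negative curvature, even though both results originate in \cite{dinaburg1971relations}.
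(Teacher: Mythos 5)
Your proposal is correct and follows essentially the same route as the paper: apply the fundamental-group presentation (Theorem 4 in the paper's numbering) to the $(n+1)$-gon, observe that for $n-4<l<n-2$ every pair $\{k,n+1\}$ is short while no triple $\{i,j,n+1\}$ is short, conclude that $\pi_1(\tilde M)$ is a non-abelian free group, and invoke the Dinaburg--Manning exponential-growth criterion to get positive topological entropy of the geodesic flow of the kinetic-energy metric. The only discrepancies are cosmetic: your references to ``Theorem 4'' and ``Theorem 5'' are swapped relative to the paper's numbering, and your count of $n$ free generators is in fact the correct one (the paper writes $n-1$, an off-by-one slip that is immaterial since exponential growth only needs at least two generators).
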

\begin{proof}
From Theorem 4 we have that $\pi_1(M)$ is free with $n-1$ generators: $\{i,n+1\}$ is always short and, conversely, $\{i,j,n+1\}$ is never short. Hence, $\pi_1(M)$ is a free group, that is, a group of an exponential growth.
\end{proof}

Note, that for the closed thread with equal segments from Theorem 4, we obtain that $\pi_1(\tilde M)$ is commutative. Therefore, we cannot conclude that the entropy is positive. Nevertheless, one can expect the topological entropy to be positive for these systems as well, yet the proof of this fact should follow not from the topological properties of the configuration space, but from the metric properties defined by the distribution of mass of the thread.

\section{Conclusion and Final Remarks}

To the best of our knowledge, the above propositions give the first non-trivial applications of the Taimanov's theorem \cite{taim1,taim2}. Note that, apparently, models (\textit{iv}) and (\textit{v}) are also non-integrable. However, again, this non-integrability does not follow from the topological properties of the configuration space ($(n-1)$-dimensional torus), but follows instead from the metric structure defined by the kinetic energy on this torus.

The next natural question that can be considered is the generalization of these results for non-integrability to the cases of spatial motion of the threads. The homology groups of spacial chains has been obtained in \cite{klyachko1994spatial}. In particular, for a closed $n$-gon where $n = 2k + 1$ and all $l_i = 1$, odd Betti numbers of the configuration space (again, considered up to the symmetries of the Euclidean space) vanish.  Therefore, Theorem 1 cannot be applied and, similarly to models (\textit{iv}) and (\textit{v}), non-integrability does not follow from these topological considerations. Here it is worth mentioning that there is a conjecture \cite{taim3} generalizing Theorem 1 that claims that the system is not integrable if for some $k$
$$
\mathrm{dim}\, H_k(M,\mathbb{Q}) > C_n^k.
$$
If this conjecture is true, that it is also possible to prove the non-integrability of spatial threads.

It is also possible to consider another type of thread, a thread that is inextensible yet can be compressed. In this case one should assume that the distance between two consecutive mass points is not equal to $l_i$, but does not exceed this value. From the mechanical point of view, one can imagine that the mass points are connected not by rigid massless rods, but by inextensible ropes.

In the simplest case when we have only one mass point connected to two fixed points. The motion is assume to be free, that is, there are no external forces acting on the system. Let $L$ be the distance between the fixed points, $l_1$ and $l_2$ be the lengths of the two ropes connecting the mass point to the fixed points (Fig. 3).

This system can be considered as a billiard with a non-smooth boundary. One of the first works where this system was considered for $l_1 = l_2$ is \cite{heller}. Later this case was studied numerically \cite{ree,makino}. It was shown that for almost all distances between the fixed points, the system is not ergodic, since there exist stable periodic trajectories. Apparently, for $l_1 \ne l_2$, the ergodicity of the corresponding billiard systems is not exceptional. To be more precise, in the two-dimensional space of parameters $l_1/l_2$, $L$ there is a set of non-zero measure corresponding to the ergodic systems \cite{chen}. This set is a subset of all systems with the hyperbolic periodic trajectory of period $2$ (this trajectory corresponds to the horizontal periodic motion in Fig. 3). Note, that the stability of the elliptic trajectory of period $2$ has been rigorously established in \cite{kamphorst}, of course, these systems cannot be ergodic.

\begin{figure}[h]
    \centering
    \includegraphics[width=0.4\textwidth]{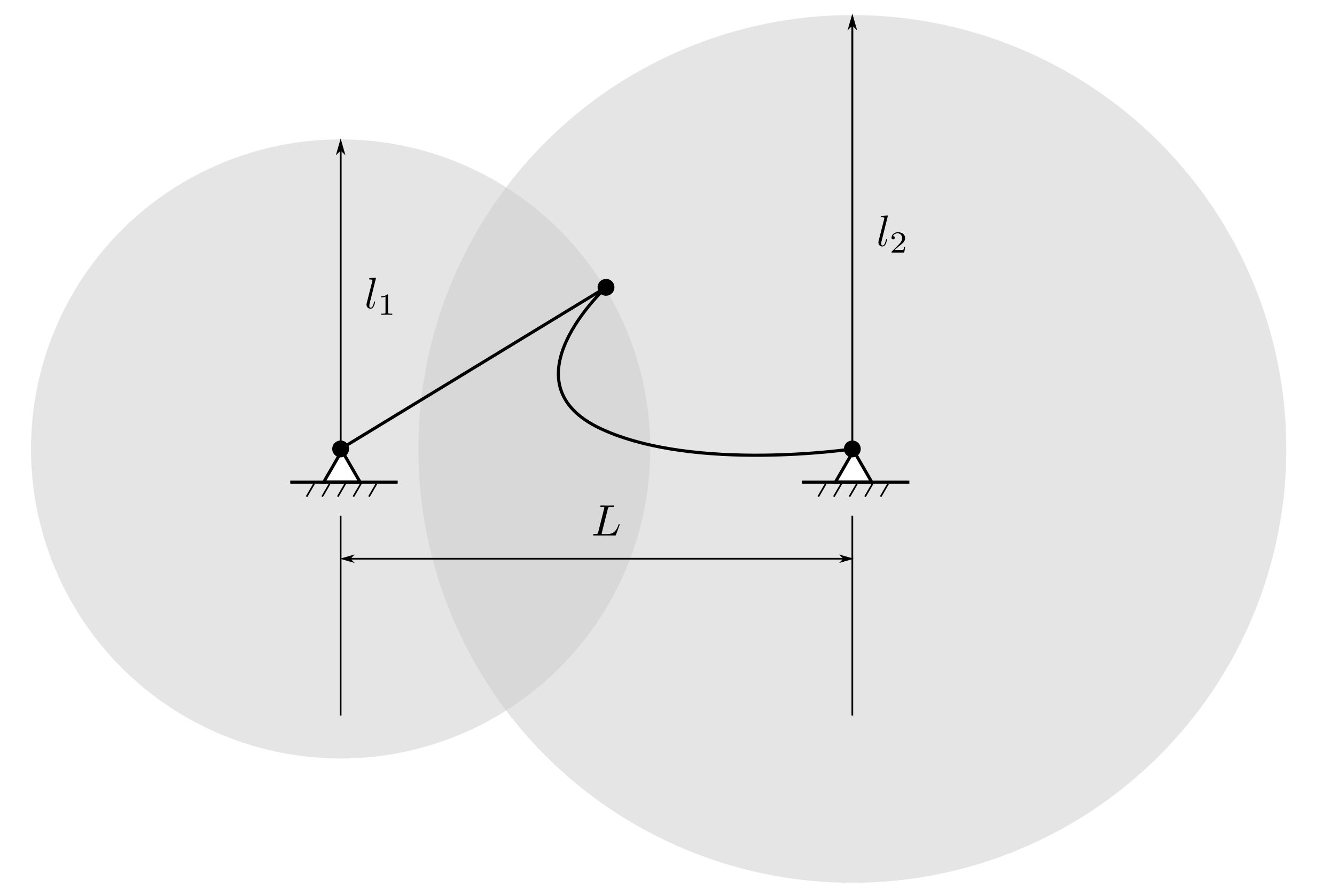}
    \caption{An example configuration of a simple model of an inextensible and compressible thread with two fixed points.}
    \label{fig1}
\end{figure}

The dynamics of two and more mass points connected by inextensible ropes is even more complex and, to the best of our knowledge, has not been studied --- at least numerically --- before.

However, it is possible to obtain a geometric result concerning the dynamics of the compressible threads provided that we consider `almost inextensible' threads. Let us have $n$ massive points moving on a plane without friction and the first and the last points are fixed. We assume that the point with number $2 \leqslant i \leqslant n-1$ interacts with points $i-1$ and $i+1$ and the potential energy of this interaction has the form $U(r_{i-1,i}) + U(r_{i,i+1})$, where $r_{i-1,i}$ and $r_{i,i+1}$ are the distances between the corresponding points and $U(d)$ is a smooth monotonous function such that $U(d) \equiv 0$ for $d \leqslant 1$ and $U(d) \to +\infty$ as $d \to +\infty$. Let the Lagrangian of the system have the form \eqref{lagrfun}, that is, we assume that there can be external potential and gyroscopic forces acting on the system. Since the total energy $L_2 - L_0 = h$, where $h \in \mathbb{R}$, does not change along the solutions of the considered system and $L_2 \geqslant 0$, then for any solution we have $-L_0(q) + h \geqslant 0$. Therefore, for a given energy $h$, the possible motion area $B_h$ is defined as follows
\begin{equation}
\label{eqregion}
B_h = \{ q \colon -L_0(q) + h \geqslant 0 \}.
\end{equation}
If $U(d)$ is a rapidly increasing function, then, for a fixed $h$, the maximum distance between any two consecutive points is close to $1$, that is, the thread is `almost inextensible'. The following result is proved in \cite{kozpol}

\begin{theorem} Let $B_h$ be a compact region and there are no critical points of $L_0$ at boundary $\partial B_h$. If the inequality $4(h+L_0)L_2 > L_1^2$ is true in $B_h \setminus \partial B_h$ for any $\dot q \ne 0$, then any point inside $B_h$ can be connected with the boundary $\partial B_h$ by a solution of energy $h$.
\end{theorem}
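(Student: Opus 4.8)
The plan is to adapt the Maupertuis--Jacobi variational description used in the proof of Proposition~4, the essential new feature being that the associated Finsler metric degenerates on $\partial B_h$. On the open region $B_h \setminus \partial B_h$, where $h + L_0 > 0$, the hypothesis $4(h+L_0)L_2 > L_1^2$ guarantees that
\[
\Phi(q,\dot q) = 2\sqrt{(h + L_0(q))\,L_2(q,\dot q)} + L_1(q,\dot q)
\]
is strictly positive for $\dot q \neq 0$, convex and positively homogeneous of degree one in $\dot q$, and therefore defines a genuine Finsler length functional whose geodesics, after the standard reparametrisation, are exactly the projections of the energy-$h$ solutions. On $\partial B_h$ one has $h + L_0 = 0$, hence $L_2 = 0$ and $\Phi \equiv 0$: the metric collapses and the Hopf--Rinow argument of Proposition~4 cannot be applied on all of $B_h$. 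The boundary must instead be regarded as a locus at finite $\Phi$-distance that trajectories reach with vanishing velocity (turning points).

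First I would show that the $\Phi$-distance from a fixed interior point $q_0$ to $\partial B_h$ is finite. Fix $q_\ast \in \partial B_h$; since $L_0$ has no critical points on $\partial B_h$, we have $\nabla L_0(q_\ast) \neq 0$, so $h + L_0$ vanishes to first order as one moves transversally into the boundary, i.e. $h + L_0 \sim c\,s$ with $c \neq 0$, where $s$ measures the distance to $\partial B_h$. Hence $\sqrt{h+L_0} \sim \sqrt{c\,s}$, and the $\Phi$-length of a short transversal segment is bounded by $\int_0^{\delta}(C\sqrt{s} + C')\,ds < \infty$. Concatenating a fixed interior path from $q_0$ with such a transversal segment produces a curve of finite $\Phi$-length, so $\mathrm{dist}_\Phi(q_0, \partial B_h) < \infty$.

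Next I would produce a length-minimising connection. Put $d = \inf \mathrm{length}_\Phi(\sigma)$ over curves $\sigma$ from $q_0$ to $\partial B_h$; by the previous step $0 < d < \infty$. Take a minimising sequence $\sigma_k$ with endpoints $p_k \in \partial B_h$, reparametrised to constant speed. Because the $\sigma_k$ lie in the compact set $B_h$ and have uniformly bounded length, they are equicontinuous, so Arzel\`a--Ascoli yields a uniform limit $\sigma_\ast$ joining $q_0$ to some $p_\ast \in \partial B_h$ (using compactness of $\partial B_h$ for the endpoints). Since $\Phi$ is continuous on $\overline{B_h}$ and convex in the velocity, the length functional is lower semicontinuous, whence $\mathrm{length}_\Phi(\sigma_\ast) \leqslant d$ and $\sigma_\ast$ realises the infimum. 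In its interior $\sigma_\ast$ is a smooth $\Phi$-geodesic meeting $\partial B_h$ only at $p_\ast$, so by the Maupertuis principle it is the projection of a solution of energy $h$. Finally, because $\nabla L_0(p_\ast) \neq 0$, this solution reaches $\partial B_h$ in finite physical time: near the turning point it behaves as a particle decelerating against an effective potential of non-vanishing gradient, so the physical time to reach $p_\ast$, of the form $\int_0^{s_0} ds/\sqrt{c\,s}$, converges while the velocity tends to zero at $p_\ast$. This produces the required solution of energy $h$ from the interior point $q_0$ to $\partial B_h$.

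The main obstacle is the degeneracy of $\Phi$ on $\partial B_h$, which must be controlled on three fronts simultaneously: that the boundary lies at finite distance, that the infimum is attained by a curve actually meeting $\partial B_h$ (rather than a minimising sequence grazing or concentrating along the boundary), and that the associated physical trajectory is regular up to the turning point and arrives in finite time. The hypothesis that $L_0$ has no critical points on $\partial B_h$ is precisely what forces the first-order vanishing of $h + L_0$ and hence the convergence of all the relevant integrals, while compactness of $B_h$ prevents the minimising curves from escaping and supplies the Arzel\`a--Ascoli compactness.
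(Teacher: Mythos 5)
First, a point of comparison: the paper does not actually prove this theorem --- it is quoted from the authors' earlier work (``The following result is proved in \cite{kozpol}''), so your attempt can only be measured against that cited proof. Your overall strategy --- reading the Maupertuis functional $\Phi = 2\sqrt{(h+L_0)L_2} + L_1$ as a Randers-type Finsler length that degenerates on $\partial B_h$, and then running a direct minimisation --- is the natural one and is in the spirit of \cite{kozpol} and of Proposition~4. However, as written the argument has genuine gaps, and they sit exactly where the theorem is hard, namely at the boundary degeneracy.

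Three concrete problems. (a) Your description of the degeneracy is wrong: on $\partial B_h$ one has $h+L_0=0$, but $L_2(q,\dot q)$ is a positive definite quadratic form in $\dot q$ and does not vanish; what is true a priori is only that $\Phi(q,\dot q)=L_1(q,\dot q)$ on the boundary, and a linear form is sign-indefinite. The fact your scheme actually needs is that $L_1$ (the coefficient vector $b$) vanishes identically on $\partial B_h$; this does follow from the standing hypothesis by letting $q\to\partial B_h$ in $4(h+L_0)L_2>L_1^2$ (taking $\dot q=b$ even gives the quantitative bound $|b|\leqslant 2\sqrt{\lambda_{\max}}\,\sqrt{h+L_0}$), but you never establish it. Without it, curves running along the boundary can have negative $\Phi$-length, your infimum $d$ need not be positive or attained, and truncating a limit curve at its first contact with $\partial B_h$ could increase length. (b) The main gap: the compactness step fails as stated. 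Uniformly bounded $\Phi$-length does not give equicontinuity, precisely because no uniform estimate $\Phi(q,\dot q)\geqslant c\,|\dot q|$ holds near $\partial B_h$; a minimising sequence can develop tails of unbounded length in the underlying metric that hug the boundary at negligible $\Phi$-cost, so Arzel\`a--Ascoli cannot be invoked. Repairing this needs real work, e.g.\ using $|b|=O(\sqrt{h+L_0})$ to show that once a competitor enters the collar $\{h+L_0\leqslant\varepsilon\}$ it can be rerouted to $\partial B_h$ along a transversal segment (here the no-critical-point hypothesis is used) of cost $O(\varepsilon^{3/2})$, so that minimising sequences may be assumed to stay in $\{h+L_0\geqslant\varepsilon\}$, where the metric is uniformly nondegenerate, up to a controlled final arc. (c) Your finite-time argument at the end tacitly assumes the limiting geodesic reaches $p_*$ transversally; for a tangential approach the relation $h+L_0\sim c\,s$ along the curve fails and the time integral $\int ds/\sqrt{h+L_0}$ need not converge, so transversality of the arrival must itself be proved rather than asserted.
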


This result in some sense complements Proposition 4: we obtain that any configuration of the thread in the possible motion area can be obtained if we start from the boundary $\partial B_h$. In particular, if $L_1 \equiv 0$, the potential energy of the external forces acting on the system is bounded and $h$ is relatively large, then we can conclude that any configuration in $B_h$ can be obtained from another configuration such that at least one pair of massive points are under tension (the corresponding distance is slightly greater than $1$).

In conclusion, returning to the question of non-integrability, we would like to mention an interesting parallel between 
the non-integrability of threads with inner interactions between the elements, which can be considered as various models for elastic properties of the system, and the classical wave equation describing the motion of an elastic string with fixed endpoints. In contrast to our model of the thread, this equation can be integrated explicitly and the general solution is a sum of the standing waves. The key difference between these two systems is that the wave equation describes the motion of an extensible string. Therefore, it may be useful to consider yet another model based on a planar or spatial polygon with extensible sides. The topology of such systems has been already studied in \cite{farber3}.

\section*{Acknowledgment}
This work was performed at the Steklov International Mathematical Center and supported by the Ministry of Science and Higher Education of the Russian Federation (agreement no. 075-15-2019-1614).


\end{document}